\setlist[enumerate,1]{label=(\roman*)}
\title{Capital and Labor Income Pareto Exponents across Time and Space\thanks{We thank Xavier Gabaix, Yannick Hoga, and Makoto Nirei for comments and suggestions.}}
\author{Tjeerd de Vries\thanks{Department of Economics, University of California San Diego. Email: \href{mailto:tjdevrie@ucsd.edu}{tjdevrie@ucsd.edu}.} \and Alexis Akira Toda\thanks{Department of Economics, University of California San Diego. Email: \href{mailto:atoda@ucsd.edu}{atoda@ucsd.edu}.}}
\numberwithin{equation}{section}
\newcommand{\capital}{\text{cap}}
\newcommand{\labor}{\text{lab}}
\renewcommand{\hat}{\widehat} 
\renewcommand{\Pr}{\operatorname{P}}
\newcommand{\Nmin}{1,000\xspace} 
\newcommand{\ncountries}{52\xspace} 
\newcommand{\Ncross}{475\xspace} 
\newcommand{\medcap}{1.46\xspace} 
\newcommand{\medlab}{3.35\xspace} 
\newcommand{\Ncap}{342\xspace}
\newcommand{\Nreject}{294\xspace}
\newcommand{\rejectPercent}{86\%\xspace}
\begin{document}
\maketitle


\begin{abstract}
We estimate capital and labor income Pareto exponents across \Ncross country-year observations that span \ncountries countries over half a century (1967--2018). We document two stylized facts: (i) capital income is more unequally distributed than labor income in the tail; namely, the capital exponent (1--3, median \medcap) is smaller than labor (2--5, median \medlab), and (ii) capital and labor exponents are nearly uncorrelated. To explain these findings, we build an incomplete market model with job ladders and capital income risk that gives rise to a capital income Pareto exponent smaller than but nearly unrelated to the labor exponent. Our results suggest the importance of distinguishing income and wealth inequality.

\medskip

{\bf Keywords:} income fluctuation problem, inequality, power law.

\medskip

{\bf JEL codes:} C46, D15, D31, D52.
\end{abstract}


\section{Introduction}

The purpose of this paper is to estimate and document the Pareto exponents for capital and labor income separately for as many countries and years as possible. We say that a positive random variable $X$ obeys a power law with Pareto exponent $\alpha>0$ if the tail probability decays like a power function: $\Pr(X>x)\sim x^{-\alpha}$ for large $x$.\footnote{\label{fn:RV}More precisely, we say that a random variable $X$ has a Pareto upper tail with exponent $\alpha>0$ if $\Pr(X>x)=x^{-\alpha}\ell(x)$ for some slowly varying function $\ell$. A function $\ell:(0,\infty)\to \R$ is said to be \emph{slowly varying (at infinity)} if it is nonzero for sufficiently large $x$ and $\lim_{x\to\infty}\ell(tx)/\ell(x)=1$ for each $t>0$. See \cite{BinghamGoldieTeugels1987} for a comprehensive treatment of the theory of regular variation.} In the context of the income distribution, the Pareto exponent characterizes the tail heaviness of high incomes and hence top tail inequality. We remain agnostic about the shape of the income distribution away from the tail. Our study is motivated by the following two observations. First, we are not aware of a comprehensive study that documents the capital and labor income Pareto exponents separately for many countries and years, despite their importance. Second, the Pareto exponent has desirable properties relative to other popular inequality measures such as the Gini coefficient or top income shares.

Consider the first point. Conceptually, capital and labor income are different entities. While the former is the return for providing capital (wealth), the latter is the return for providing labor services, and there is no particular reason to expect a relation between the two. Although these two forms of income are conceptually distinct, it is often put together as just ``income'' and discussed in the context of inequality and related policies. If capital and labor income are quantitatively different, a policy design based on total income may be misleading. To give one example, consider the theory of optimal taxation \citep{Saez2001}, where the income Pareto exponent plays an important role. \cite{SaezStancheva2018} carefully distinguish capital and labor income and apply the theory of optimal taxation in the United States. They find that with an income elasticity of $e=0.5$, the optimal top marginal tax rate is about 50\% for labor and 60\% for capital (see their Figure 5). This difference directly comes from the fact that capital and labor income Pareto exponents are distinct. Thus, distinguishing capital and labor income inequality is potentially important for policy designs.

Consider the second point about the desirable properties of Pareto exponents. In the applied literature such as \cite{Piketty2003} and \cite{piketty-saez2003}, top income shares (such as the top 1\% income share) are more commonly reported than the income Pareto exponent, perhaps because top shares are summary statistics that can be computed without specifying functional forms or can be understood by non-experts without special knowledge of statistics. However, \cite{atkinson2005comparing} documents methodological problems regarding the cross-country comparison of top income shares, citing the differences in tax units (\eg, individual or household) and legislation (\eg, whether social security benefits are taxable). One of the reasons such issues arise is because it is not always clear how to define the population and measure small units.\footnote{Imagine how to formally distinguish cities, towns, villages, and settlements; continents, islands, and islets; and inland seas, lakes, and ponds. How to define units and how to measure small units matter for the size distribution of population, land mass, and water surface area.} Because common inequality measures such as the Gini coefficient and top income shares require the knowledge of the entire distribution, these quantities are greatly affected by the definition and measurement of small units. Using the Pareto exponents significantly alleviates these definition and measurement issues because the Pareto distribution is scale invariant (see \citealp{JessenMikosch2006} for a summary) and its exponent depends only on the tail behavior, not the entire distribution. For example, doubling the income of all households in the top 1\% of the income distribution makes the top 1\% income share (roughly) twice as large, but the Pareto exponent is unaffected. A similar comment applies to any inequality measure that depends on the entire distribution, such as the Gini coefficient. While we do not claim that the Pareto exponent is the only interesting inequality measure, it is certainly a robust (detail-independent) measure for top tail inequality. See \cite{gabaix2009,Gabaix2016JEP} for more discussion on the robustness of the Pareto distribution.

In this paper, we use the harmonized \emph{Luxembourg Income Study} database (hereafter \citetalias{LIS2021}) to document the capital and labor income Pareto exponents across all available \Ncross country-year observations that span \ncountries countries over half a century. We document two empirical findings. First, we find that the capital income Pareto exponent is roughly in the range 1--3 (with median \medcap) and is smaller than the labor income Pareto exponent, which ranges between 2--5 (with median \medlab). This implies that capital income is more unequally distributed than labor income. This fact is unsurprising and well known for a specific country or year (see, for example, the Lorenz curve in Figure 1 of \citealp{SaezStancheva2018}). However, we are not aware of a comprehensive study that systematically analyzes datasets from many countries and years, and therefore our finding suggests that capital income is generally more unequal than labor income. More specifically, using a statistical test recently developed by \cite{hoga2018detecting}, we formally test the equality of capital and labor income Pareto exponents and the null is rejected in \rejectPercent of samples. In every single case of rejection, the capital exponent is smaller than the labor exponent. Second, we find that the capital income Pareto exponent is nearly unrelated to the labor exponent. In particular, the correlation between the two exponents across countries is close to zero.

To explain our empirical findings, we build a simple incomplete market model with job ladders and capital income risk. In the model, agents get randomly promoted to the next job ladder. Because individual income follows a random multiplicative process, we obtain a Pareto-tailed labor income distribution. The agents also save assets and face idiosyncratic investment risk, which generates a Pareto-tailed wealth (hence capital income) distribution. Because the capital income Pareto exponent is mainly determined by the asset return distribution, while the labor income Pareto exponent is mainly determined by the income growth distribution, the relation between the two is weak. Furthermore, we analytically characterize the capital and labor income Pareto exponents and show that the former tends to be smaller than the latter for common parametrization. Our results suggest the importance of distinguishing income and wealth inequality.

\paragraph{Related literature}

The power law behavior of income was first recognized by \cite{Pareto1895LaLegge,Pareto1896LaCourbe,Pareto1897Cours}, who used tabulation data of tax returns in many European countries. More recent research that employs micro data include \cite{reed2001} for U.S., \cite{reed2003} for U.S., Canada, Sri Lanka, and Bohemia, \cite{nirei-souma2007} for Japan, \cite{Toda2011PRE,Toda2012JEBO} for U.S., \cite{Jenkins2017} for U.K., and \cite{IbragimovIbragimov2018} for Russia. These papers all concern specific countries and years. \cite{BandourianMcDonaldTurley2003} estimate eleven parametric distributions (some of which exhibit Pareto tails) using 82 household labor income datasets from \emph{Luxembourg Income Study} \citepalias{LIS2021} as we do, though they neither focus on the Pareto exponent nor consider capital income. \cite{gabaix2009} mentions ``[the] tail exponent of income seems to vary between 1.5 and 3'', citing \cite{AtkinsonPiketty2007}, though without providing specific details. \citet[Table 13A.23]{AtkinsonPiketty2010} document income Pareto exponents across many countries and years estimated from top income share data based on tax returns. However, these estimates are computed from total income, and since (as we document in Section \ref{subsec:test}) the capital income Pareto exponents tend to be smaller than labor exponents, their estimates are best understood as capital income (hence wealth) Pareto exponents. \cite{BenhabibBisinLuo2017} make the point that wealth is more skewed than income, citing a few Pareto exponent estimates from \cite{BadelDalyHuggettNybom2018} for income and \cite{Vermeulen2018} for wealth. Using the 2013-2014 individual income tax data from Romania, \cite{Oancea_2018} document that the capital income Pareto exponent (1.44) is smaller than the labor exponent (2.53). Using time series regressions for 21 countries, \cite{Bengtsson_2018} document a positive correlation between the gross capital share in national accounts and the top 1\% income shares. Their finding can be explained if top income earners tend to have higher capital income share, which is the case if the capital income Pareto exponent is smaller than the labor exponent as we document in this paper. As mentioned in the introduction, there seems to be no comprehensive studies that document the capital and labor income Pareto exponents separately for many countries and years.

\section{Data}\label{sec:data}

In this section we describe the dataset that we use and discuss its limitations.

\subsection{The LIS database}
We use the data from the \emph{Luxembourg Income Study} \citepalias{LIS2021}, which is a large, harmonized database of micro-level income data that covers over 50 countries worldwide and many years since the late 1960s. In many countries, the data derive from government surveys (for example, the U.S.\ data is based on the \emph{Current Population Survey}). The \citetalias{LIS2021} data are available at both individual and household level. We focus on the household labor and capital income because
\begin{enumerate*}
	\item it is reasonable to assume that economic decisions such as financial planning are made at the household level, and
	\item incomes among couples are likely correlated due to assortative matching in the marriage market \citep{Siow_2015}, which invalidates statistical estimation.\footnote{In our data, we find an average correlation of 0.22 between labor income of husband and wife, which underpins the conjectured dependency.}
\end{enumerate*}
The \citetalias{LIS2021} defines \emph{labor income} as ``cash payments and value of goods and services received from dependent employment, as well as profits/losses and value of goods from self-employment, including own consumption''. \emph{Capital income} is defined as ``cash payments from property and capital (including financial and non-financial assets), including interest and dividends, rental income and royalties, and other capital income from investment in self-employment activity''. Together these two categories make up total \emph{factor income}. See the \citetalias{LIS2021} 2019 USER GUIDE\footnote{\url{https://www.lisdatacenter.org/wp-content/uploads/files/data-lis-guide.pdf}} for a detailed summary on how these data are retrieved and calculated.

\subsection{Data limitations}
Our analysis draws upon datasets from many different countries that are harmonized into a common framework by the \citetalias{LIS2021}. However, many details about the collection of data in the different countries are omitted. For example, we find evidence of top-coding in some countries and years, as the largest income order statistic is equal to the second largest.\footnote{Among all \Ncross country-year observations, the first and second order statistics are equal in 6 cases for labor income and 11 cases for capital income. Therefore we conjecture that the top-coding issue is not severe.} Top-coding induces an upward bias in the estimation of the Pareto exponent. This issue is not necessarily resolved if, instead, one relies on administrative tax income data, for similar biases arise such as rich households trying to understate their taxable income \citep{atkinson2011top}. \cite{burkhauser2012recent} detail a method that can be used to overcome the bias due to top-coding, however at the end of their paper they show that the results are robust even if estimates are based on the top-coded series. For these reasons we treat the datasets as not being top-coded in our analysis.

Another limitation of the \citetalias{LIS2021} database is that it is based on government surveys and the measurement error may be larger compared to administrative data based on tax returns. The fact that the income distribution in administrative data is often reported as tabulations, not micro data, causes no problem for estimating Pareto exponents, as \cite{TodaWang2021JAE} provide an efficient estimation method for such data. In fact, \citet[Table 13A.23]{AtkinsonPiketty2010} document income Pareto exponents across countries and years estimated from top income share data. However, their table is based on total income, and since (as we document in Section \ref{subsec:test}) the capital income Pareto exponents tend to be smaller than labor exponents, the estimates in \cite{AtkinsonPiketty2010} are best understood as capital income (hence wealth) Pareto exponents. Since we are not aware of a comprehensive income database that distinguishes capital and labor income, we decided to use the \citetalias{LIS2021} database.

\section{Pareto exponents across countries and years}\label{sec:Pareto}

In this section we estimate the capital and labor income Pareto exponents for all countries and years that are available in the \citetalias{LIS2021} database, which spans \ncountries countries over half a century (\Ncross country-year observations in total). We then formally test for the equality of the Pareto exponents of capital and labor income.

\subsection{Estimation method}\label{subsec:estim}
For each country and year, we suppose that the (capital or labor) income observations $\set{X_n}_{n=1}^N$ are independent and identically distributed (\iid) with cumulative distribution function (CDF) $F(x)=\Pr(X_n \le x)$. The assumption that the upper tail of income obeys a power law with Pareto exponent $\alpha>0$ translates into the regular variation condition
\begin{equation}
1-F(x)=x^{-\alpha}\ell(x) \label{eq:RV}
\end{equation}
for some slowly varying function $\ell$ (see Footnote \ref{fn:RV}). Note that the assumption on $\ell$ involves only the limit as $x\to\infty$; we are assuming a power law behavior in the upper tail without taking a stance on the shape of the entire distribution.

We are interested in estimating the Pareto exponent $\alpha$ for each country and year. For this purpose, we employ the \citet{hill1975simple} (maximum likelihood) estimator 
\begin{equation}
\frac{1}{\hat{\alpha}(k)} \coloneqq \frac{1}{k} \sum_{n=1}^k \log\left(\frac{X_{(n)}}{X_{(k)}} \right). \label{eq:Hill}
\end{equation}
Here $X_{(n)}$ denotes the $n$-th largest order statistic from the sample $\set{X_n}_{n=1}^N$ and $k\in \set{1,\dots,N}$ denotes the number of tail observations used to estimate the Pareto exponent. \citet{Hall1982} shows that the standard error of $\hat{\alpha}(k)$ is $\hat{\alpha}(k)/\sqrt{k}$ under flexible assumptions on the CDF (see \eqref{eq:hall} below). We use the Hill estimator because we are interested in formally testing the equality of the capital and labor income Pareto exponents using the method of \cite{hoga2018detecting}, for which the Hill estimator is required.\footnote{If we are only interested in estimating the Pareto exponents, then there are many alternative methods available. \cite{GomesGuillou2015} review 13 commonly used estimators. \cite{Fedotenkov2020} reviews more than 100.}

When the population distribution is known to be exactly Pareto (so $\ell$ in \eqref{eq:RV} is zero below the minimum size $x_{\min}$ and constant above this threshold), it is well known that the Hill estimator for the full sample ($k=N$) is consistent, asymptotically normal, and asymptotically efficient because it is a maximum likelihood estimator. In practice, the CDF is not exactly Pareto and the researcher needs to select an appropriate value of $k$. For instance, if $F(x)$ satisfies
\begin{equation}\label{eq:hall}
1-F(x) = Cx^{-\alpha} (1 + Dx^{-\beta} + o(x^{-\beta}))
\end{equation}
with some $\beta>0$, then \cite{Hall1982} shows that choosing $k = o(N^{2\beta/(2\beta + \alpha)})$ together with $k \to \infty$ as $N \to \infty$ is sufficient for consistency and asymptotic normality (see also \citealp{embrechts2013modelling}).\footnote{Recall that we write $f(x)=o(g(x))$ if for all $\epsilon>0$, we have $\abs{f(x)}\le \epsilon\abs{g(x)}$ for large enough $x$.} Notice that this choice puts a bound on the growth rate of $k$. The expansion \eqref{eq:hall} covers a wide range of distributions of interest, such as the $t$-distribution and the type II extreme value distribution \citep{danielsson1997tail}.

Despite these asymptotic results, it is notoriously difficult to pick $k$ optimally in finite samples \citep{Hall1990,ResnickStarica1997,Danielsson_2001}. In practice, researchers often plot the Hill estimator \eqref{eq:Hill} over a range of $k$ to find a flat region or plot the log rank $\log 1,\dots,\log N$ against the log size $\log X_{(1)},\dots,\log X_{(N)}$ to find a region that exhibits a straight line pattern and choose a size threshold to run the log-rank regression.\footnote{\cite{gabaix-ibragimov2011} study the asymptotic behavior of log rank regression and show that the standard error is larger by a factor of $\sqrt{2}$ than the Hill estimator. However, they do not discuss how to select the threshold. In their empirical application, they consider the size distribution of population in U.S.\ metropolitan statistical areas, which are already far into the tail and hence the threshold selection is less of an issue. \cite{IbragimovIbragimov2018} apply the same methodology to Russian household income data and consider the top 5\% and 10\% thresholds.} Unfortunately, this graphical approach is not feasible in our setting because \citetalias{LIS2021} does not allow researchers to download the micro data for confidentiality concerns (researchers are required to submit their execution files to conduct statistical analyses) and there is little scope for exploratory graphical data analysis. In this paper we simply use the largest 5\% observations, so 
\begin{equation}\label{eq:5prule}
k=\floor{0.05N},
\end{equation}
which is standard in the literature.\footnote{An alternative approach is to estimate a parametric distribution $F$ that admits a Pareto upper tail by maximum likelihood using the entire sample. The double Pareto-lognormal distribution proposed by \cite{reed2003} and \cite{reed-jorgensen2004} often performs best. See \cite{Toda2012JEBO} for a horse race across several parametric distributions in the context of U.S.\ labor income.} Unreported simulations show that our results are robust to using other thresholds such as the largest 1\% and 10\% observations, or using a data-driven procedure using the Kolmogorov-Smirnov distance as in \cite{danielsson2016tail}.

\subsection{Capital and labor income Pareto exponents}\label{subsec:results}

We estimate the capital and labor income Pareto exponents for all countries and years available in the \citetalias{LIS2021} database. The database spans \ncountries countries across the years 1967--2018, with a total of \Ncross country-year observations. The point estimates of the capital and labor income Pareto exponents for each country and year as well as their standard errors can be found in Table \ref{t:Pareto} in Appendix \ref{sec:tables}. To avoid small sample issues, we restrict our analysis to countries with at least \Nmin positive observations for income, resulting in (all) \Ncross country-year pairs for labor income and \Ncap for capital income. For visibility, Figure \ref{fig:alpha} shows the histogram and scatter plot of the capital and labor income Pareto exponents.

\begin{figure}[!htb]
\centering
\begin{subfigure}{0.48\linewidth}
\includegraphics[width=\linewidth]{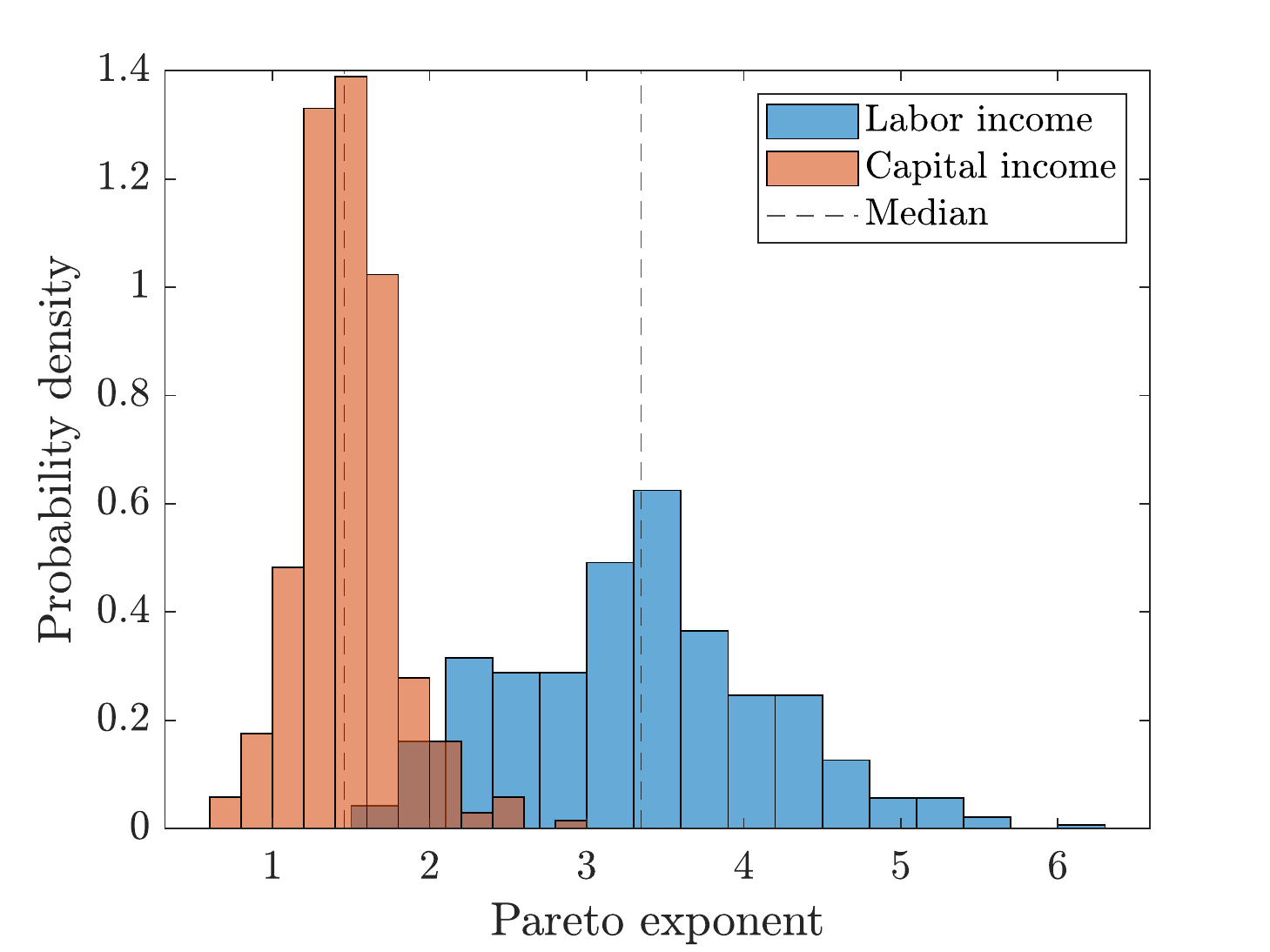}
\caption{Histogram.}\label{fig:alphaHist}
\end{subfigure}
\begin{subfigure}{0.48\linewidth}
\includegraphics[width=\linewidth]{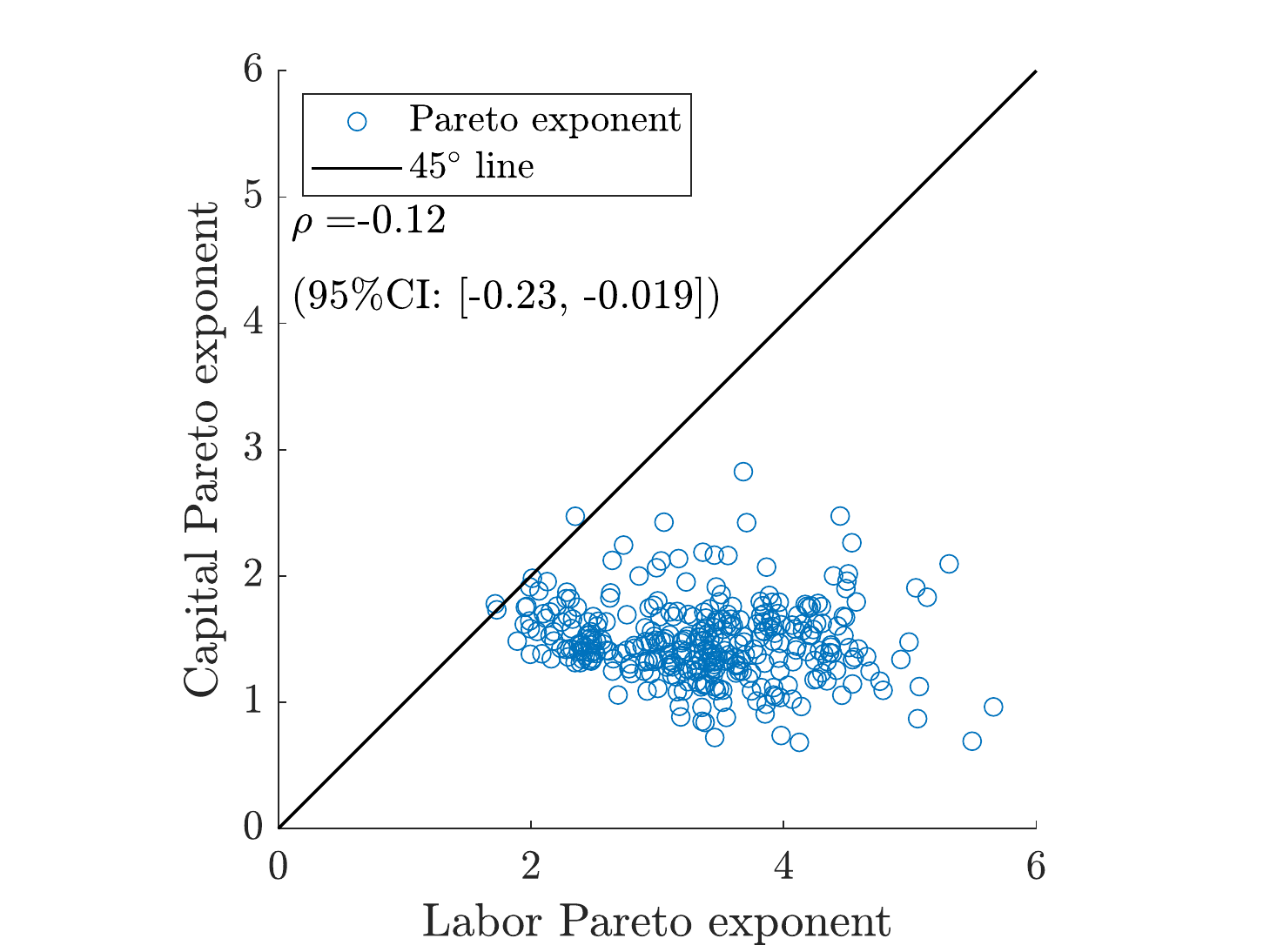}
\caption{Scatter plot.}\label{fig:alphaScatter}
\end{subfigure}
\caption{Histogram and scatter plot of capital and labor income Pareto exponents.}\label{fig:alpha}
\end{figure}

Figure \ref{fig:alphaHist} shows the histogram of the capital and labor income Pareto exponents pooled across all available countries and years. The capital and labor income Pareto exponents are generally in the range 1--3 and 2--5 with medians \medcap and \medlab, respectively. This suggests that
\begin{enumerate*}
\item capital income is generally more unequally distributed than labor income, but
\item there is significant heterogeneity in both capital and labor income inequality across countries and years.
\end{enumerate*}
Figure \ref{fig:alphaScatter} shows the scatter plot of the Pareto exponents together with the 45 degree line. The confidence interval of the correlation coefficient $\rho$ is computed assuming all country-year observations are independent and there is no sampling error in the point estimates of the Pareto exponents. Although it is not obvious how to account for these issues, doing so will only widen the confidence interval. Therefore the fact that the na\"{\i}ve confidence interval almost contains zero suggests that the correlation between the two Pareto exponents is weak. Furthermore, for the vast majority of countries and years, the capital income Pareto exponent is smaller than the labor exponent, again suggesting that capital income is more unequal than labor income.

How do the Pareto exponents evolve over time? Because many countries appear only sporadically in the \citetalias{LIS2021} database, we only consider six countries for which the cross-sectional sample size is large and the time series is long, namely Canada, Germany, Switzerland, Taiwan, United Kingdom, and United States. Figure \ref{fig:alphaTSselect} plots the capital and labor Pareto exponents and their 95\% confidence intervals for these countries. The capital exponent tends to be stable at around 1--2 and is smaller than the labor exponent.\footnote{An exception may be Germany before 1983. However, this may be an artifact of the sudden change in sample size, which was over 40,000 until 1983 and about 5,000 since 1984. Thus the 5\% rule for capital income may be including too many observations in the body of the distribution before 1983.} Again, capital income appears to be more unequally distributed than the labor income.

\begin{figure}[!htb]
\centering
\includegraphics[width=0.48\linewidth]{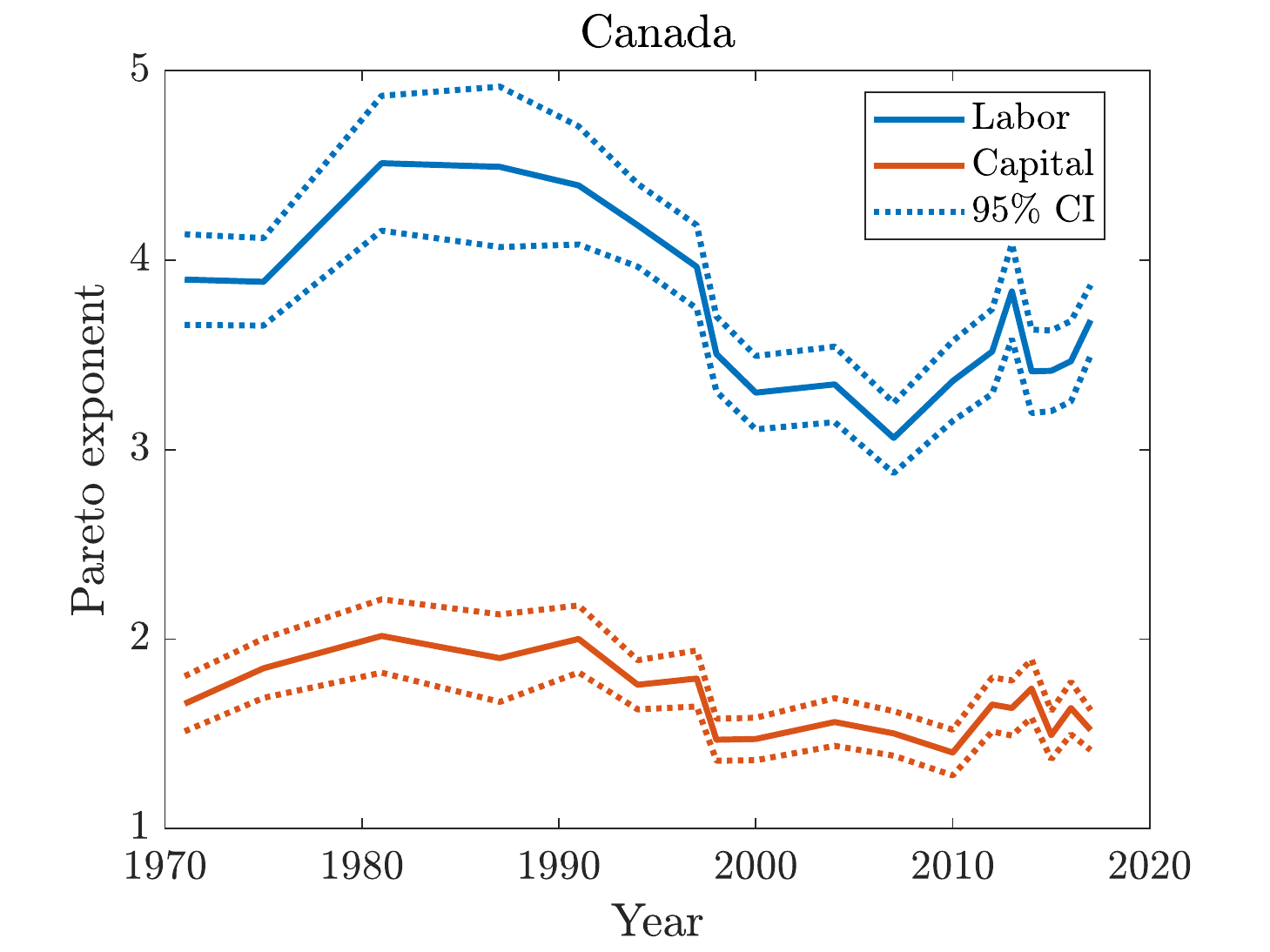}
\includegraphics[width=0.48\linewidth]{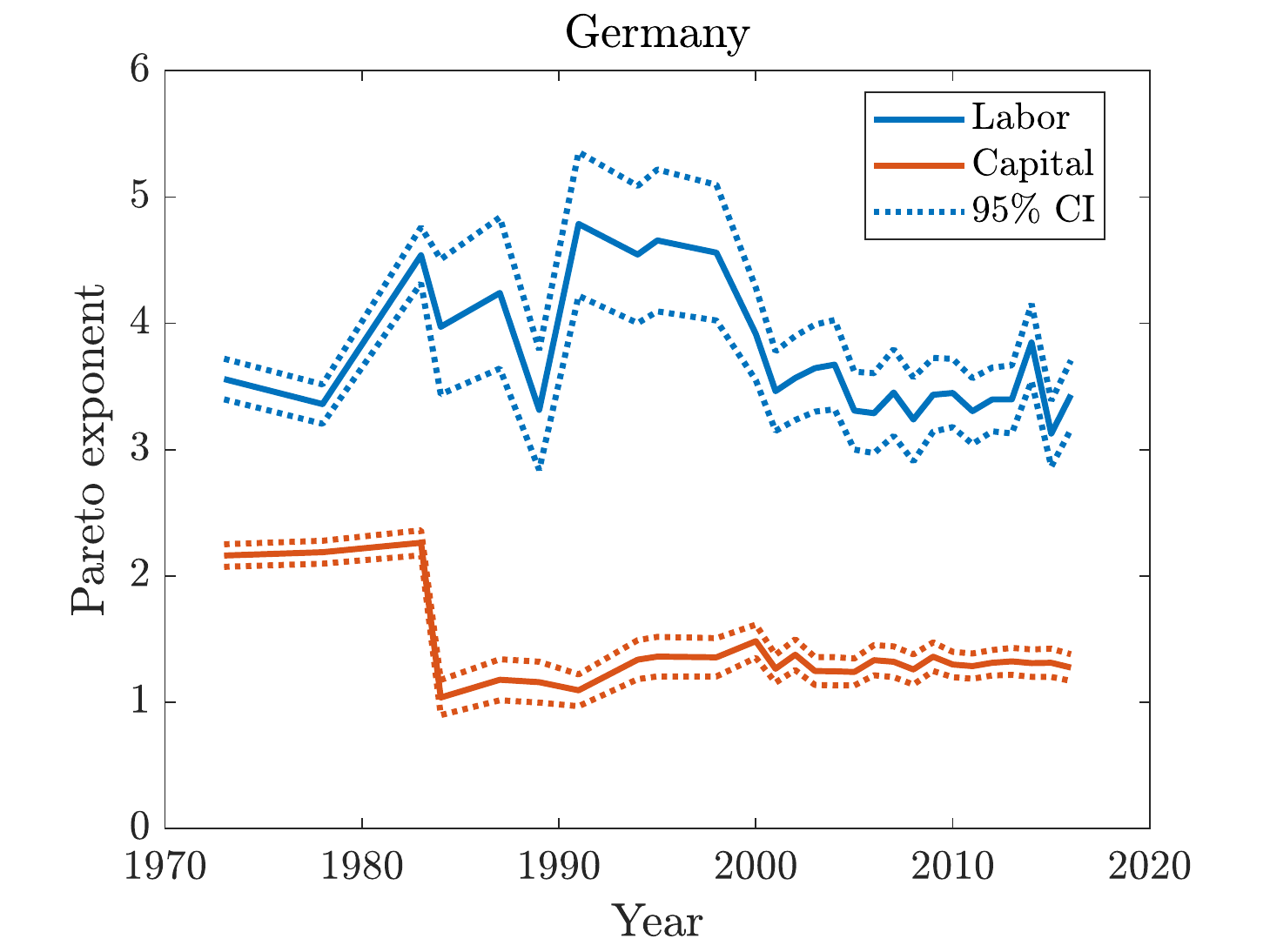}
\includegraphics[width=0.48\linewidth]{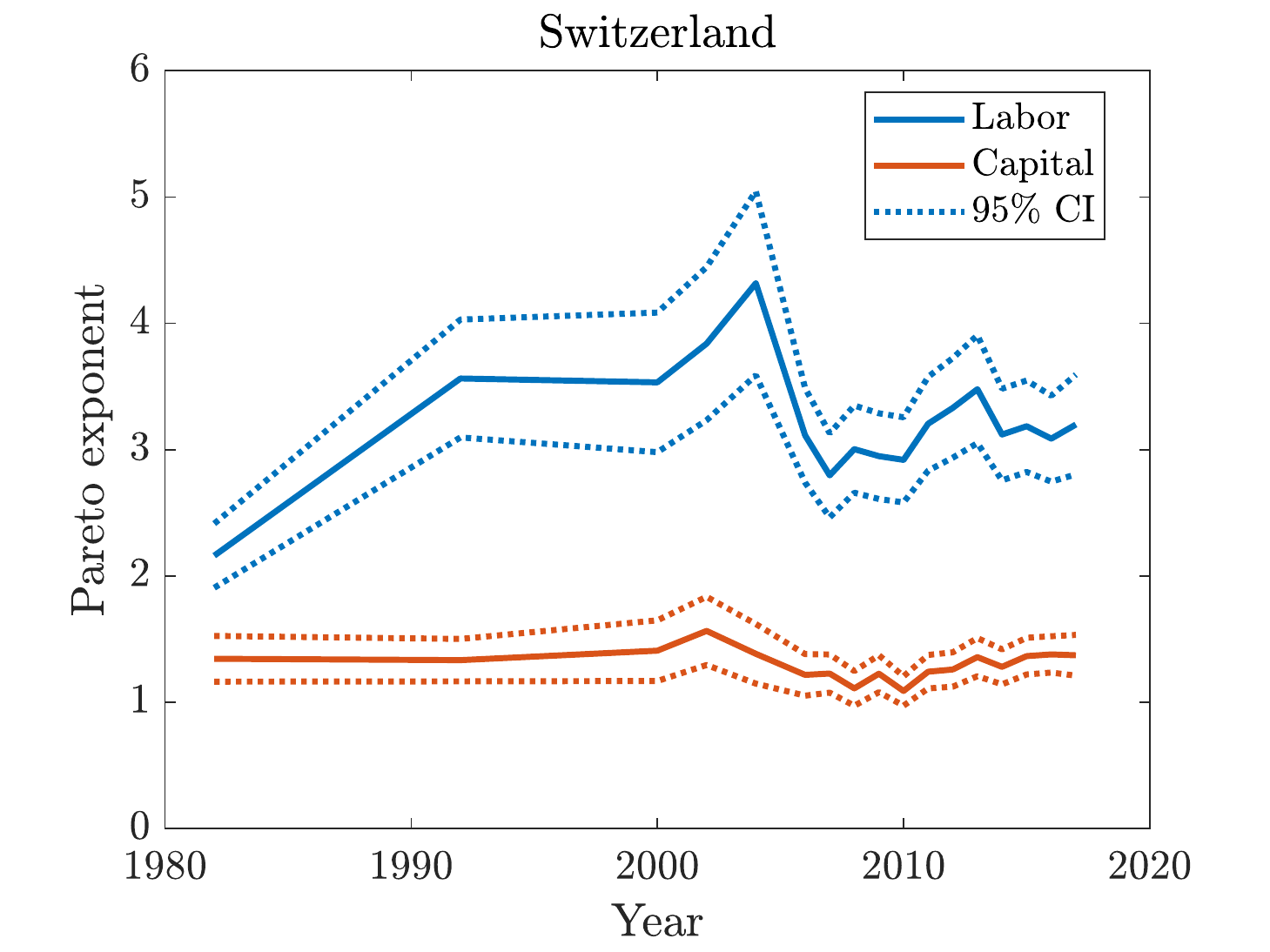}
\includegraphics[width=0.48\linewidth]{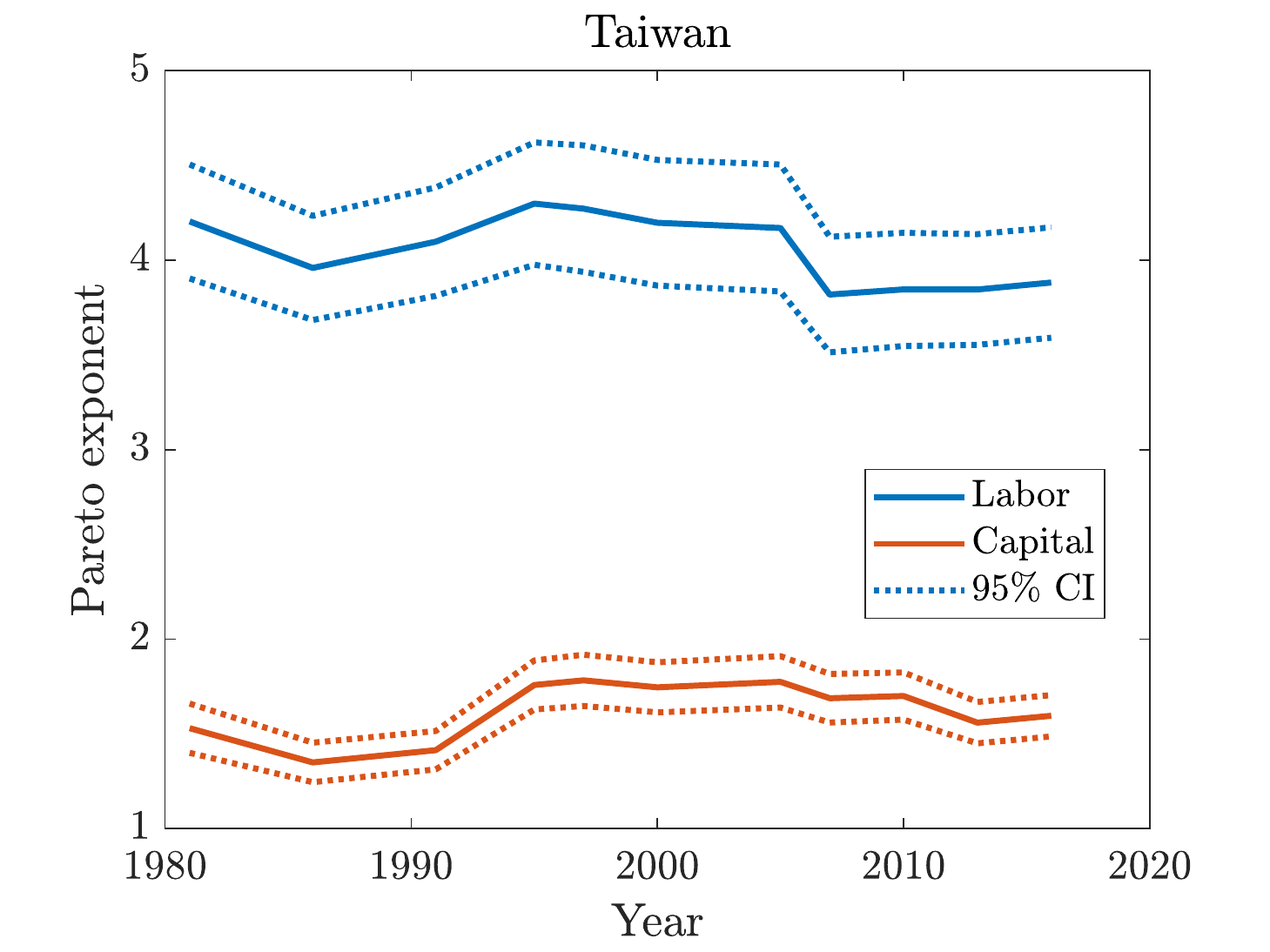}
\includegraphics[width=0.48\linewidth]{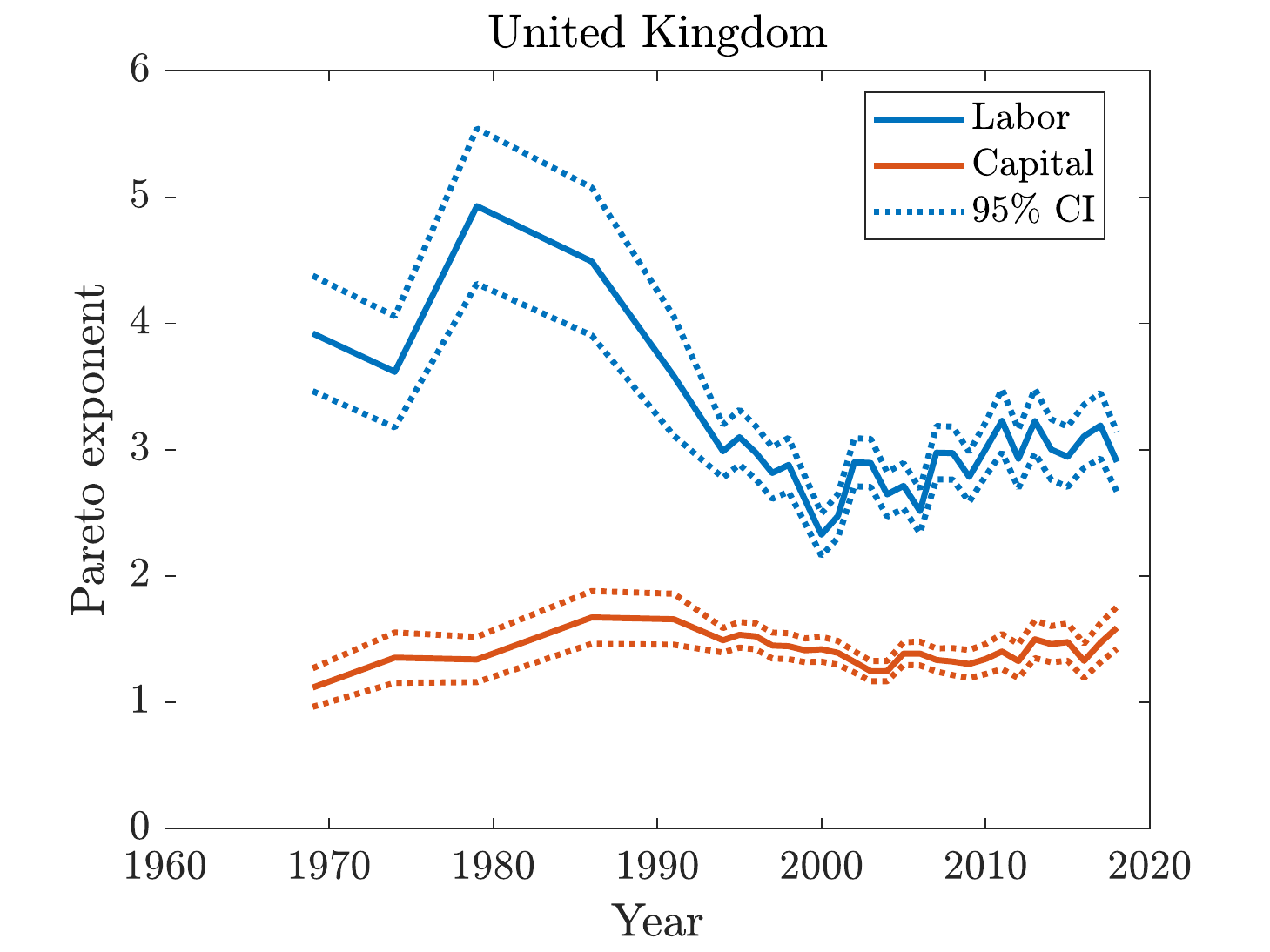}
\includegraphics[width=0.48\linewidth]{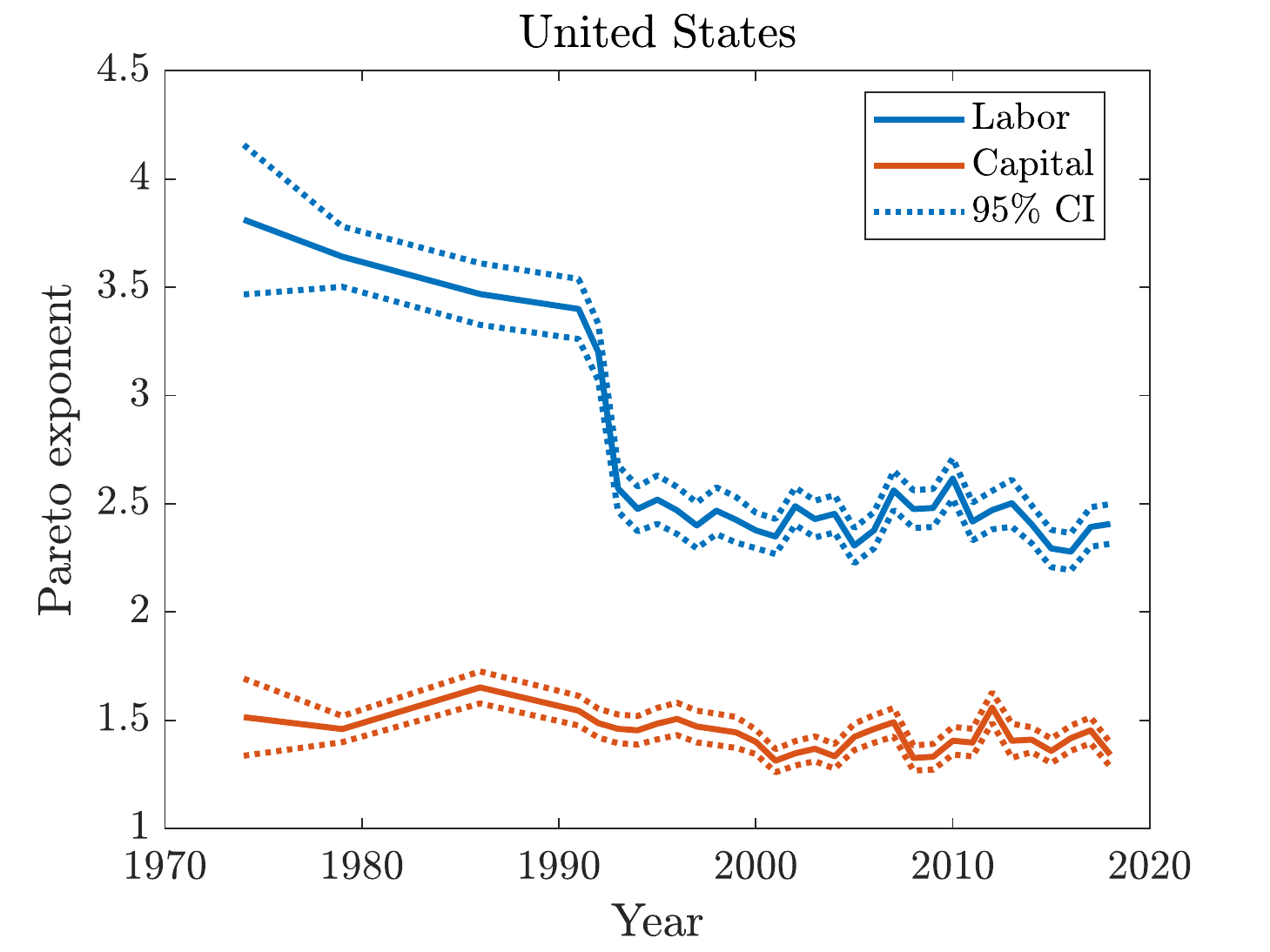}
\caption{Time evolution of capital and labor income Pareto exponents.}\label{fig:alphaTSselect}
\end{figure}

\subsection{Testing equality of capital and labor Pareto exponents}\label{subsec:test}

We now formally test whether the capital and labor Pareto exponents are equal. In particular our test is
\begin{equation*}
H_0: \alpha_\labor = \alpha_\capital \quad \text{against} \quad H_1: \alpha_\labor \ne \alpha_\capital,
\end{equation*}
where $\alpha_\capital,\alpha_\labor$ denote the capital and labor income Pareto exponents. Testing the null hypothesis $H_0$ is complicated by the fact that there is dependency between labor income $\set{X_{\labor,n}}_{n=1}^N$ and capital income $\set{X_{\capital,n}}_{n=1}^N$, because individuals who are rich (receive high labor income) tend to be wealthy and receive high capital income. Thus we cannot use the 95\% confidence intervals in Figure \ref{fig:alphaTSselect} to test the equality of Pareto exponents. Instead, we apply the test recently developed by \citet{hoga2018detecting}, which allows for  dependence in the data but assumes restrictions on the growth rate of the tail dependence (see \citealp[Assumption A2]{hoga2018detecting}). The test is based upon the inverse of the Hill estimator \eqref{eq:Hill}, which we denote by $\hat{\gamma} \coloneqq 1/\hat{\alpha}$. The test statistic is defined by
\begin{equation}
T_N = \frac{(\hat{\gamma}_\labor(1) - \hat{\gamma}_\capital(1))^2}{\int_{t_0}^{1}t^2 \left[(\hat{\gamma}_\labor(t) - \hat{\gamma}_\capital(t)) - (\hat{\gamma}_\labor(1) - \hat{\gamma}_\capital(1))\right]^2 \diff t}, \label{eq:HogaT}
\end{equation}
where $t_0 \in (0,1)$ is a tuning parameter and $\hat{\gamma}(t)$ is the inverse Hill estimator 
\begin{equation}
\hat{\gamma}(t) \coloneqq \frac{1}{\floor{kt}} \sum_{n = 1}^{\floor{kt}} \log\left(\frac{X_{(n)}}{X_{(\floor{kt})}}\right). \label{eq:gamhat}
\end{equation}
Using the Hill estimator based on the subsample with only $\floor{kt}$ observations leads to self-normalization of the test statistic $T_N$ and renders a test that is asymptotically pivotal. The limiting distribution is
\begin{equation}
T_N \dto \frac{W(1)^2}{\int_{t_0}^1 [W(t) - tW(1)]^2 \diff t}, \label{eq:HogaTlim}
\end{equation}
where $W(t)$ is a standard Brownian motion. Since the test statistic \eqref{eq:HogaT} can be computed using only the Hill estimator and conducting numerical integration, there is no need to estimate the (potentially difficult) tail covariance. The tuning parameter $t_0$ affects the size of the test in finite samples: high values of $t_0$ make the integral in \eqref{eq:HogaT} based on too few differences of $\hat{\gamma}$, and low values of $t_0$ yield volatile $\hat{\gamma}$ in \eqref{eq:gamhat} when $t$ is close to $t_0$. Both of these effects may cause size distortions. Therefore we set $t_0 = 0.2$ following the recommendation of \cite{hoga2018detecting}, who finds that this choice leads to favorable size properties.\footnote{The choice of $t_0$ in \citet{hoga2018detecting} comes out using an automated selection procedure to choose $k$, which is different than our 5\% rule. An earlier version of our paper employs the same automated procedure, which leads to very similar results.} We reject the null $H_0$ when the test statistic $T_N$ is large. According to Table I of \cite{hoga2018detecting}, the 95 percentile of \eqref{eq:HogaTlim} for $t_0=0.2$ is 55.44, which we use as the critical value for testing $H_0$ at 5\% significance level. 

One issue with the test statistic \eqref{eq:HogaT} is that it requires the same number of tail observations $k$ for both cross-sections of capital and labor income. Hence the 5\% rule \eqref{eq:5prule} discussed in Section \ref{subsec:estim} becomes problematic as the number of people with capital income in our data set is rather small. Many households do not hold liquid financial wealth and hence have no capital income. The resulting test is thus not feasible since $k_\labor$ based on our 5\% rule could be wildly different from $k_\capital$. To overcome this issue, we only test the equality of Pareto exponents for countries that have more than \Nmin positive capital income observations and set $k = \floor{0.05N_\capital}$, where $N_\capital$ is the number of positive capital income observations (these households always have positive labor income), resulting in \Ncap country-year observations out of \Ncross. In practice this means that we estimate the Pareto exponent of labor income further in the tail because the sample size of labor income $N_\labor$ tends to be larger than $N_\capital$. However, this is acceptable since the Pareto approximation tends to fit better for smaller $k$. Figure \ref{fig:alphaLScatter} presents a scatter plot of the estimated labor income Pareto exponent $\hat{\alpha}_\labor$ using 5\% of the full sample ($N_\labor$) and 5\% of the sample with positive capital income ($N_\capital$). The fact that most points are close to the 45 degree supports our claim.

\begin{figure}[!htb]
\centering
\includegraphics[width=0.7\linewidth]{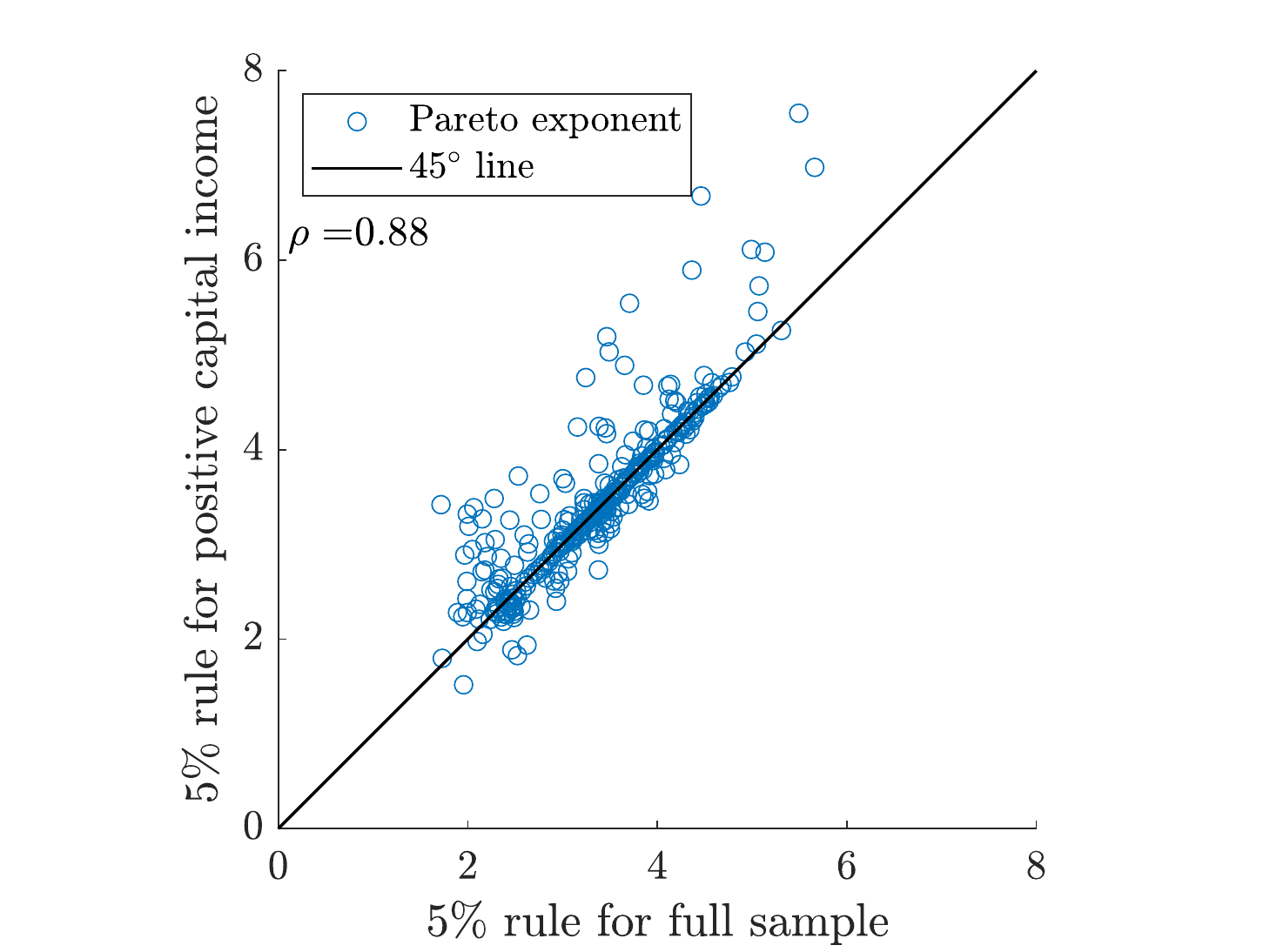}
\caption{Labor Pareto exponent with full and positive capital income samples.}\label{fig:alphaLScatter}
\end{figure}

Table \ref{t:equal} in Appendix \ref{sec:tables} shows the test results of the null hypothesis $H_0: \alpha_\labor = \alpha_\capital$. 
We reject the null in \Nreject country-year observations out of \Ncap (\rejectPercent) that meet our sample selection criterion. In every single case of rejection, we have $\hat{\alpha}_\labor>\hat{\alpha}_\capital$, and therefore we formally confirm the observation in Section \ref{subsec:results} that capital income is more unequally distributed than labor income.

\section{Model of capital and labor Pareto exponents}\label{sec:IFP}

Our empirical analysis in Section \ref{sec:Pareto} suggests that
\begin{enumerate*}
\item the capital income Pareto exponent is smaller than the labor one (\ie, capital income is more unequally distributed than labor income), and
\item the correlation between capital and labor income Pareto exponents is weak.
\end{enumerate*}
To explain these empirical findings, we present a simple dynamic model of consumption and savings, which builds on one of the authors' prior works \citep{MaStachurskiToda2020JET,MaToda2021JET}. Our model is more specialized but the characterizations are sharper. The proofs of propositions in this section are deferred to Appendix \ref{sec:proof}.

\subsection{Income fluctuation problem}\label{subsec:IF}
Time is discrete and denoted by $t=0,1,2,\dotsc$. Let $a_t$ be the financial wealth of a typical agent at the beginning of period $t$ including current income. The agent chooses consumption $c_t\ge 0$ and saves the remaining wealth $a_t-c_t$. The period utility function is $u:(0,\infty)\to \R$, the discount factor is $\beta>0$, the gross return on wealth between time $t-1$ and $t$ is $R_t>0$, and non-financial income at time $t$ is $Y_t>0$. Thus the agent solves
\begin{subequations}\label{eq:IF}
\begin{align}
&\maximize && \E_0\sum_{t=0}^\infty \beta^tu(c_t) \label{eq:IF_utility}\\
&\st && a_{t+1}=R_{t+1}(a_t-c_t)+Y_{t+1}, \label{eq:IF_budget}\\
&&& 0\le c_t\le a_t, \label{eq:IF_borrow}
\end{align}
\end{subequations}
where the initial wealth $a_0=a>0$ is given, \eqref{eq:IF_budget} is the budget constraint, and \eqref{eq:IF_borrow} implies that the agent cannot borrow (which is without loss of generality according to the discussion in \citealp{ChamberlainWilson2000}). Throughout the rest of the paper we maintain the following assumptions.

\begin{asmp}[CRRA utility]\label{asmp:CRRA}
The utility function exhibits constant relative risk aversion (CRRA) with coefficient $\gamma>0$, so $u(c)=\frac{c^{1-\gamma}}{1-\gamma}$ if $\gamma\neq 1$ and $u(c)=\log c$ if $\gamma=1$.
\end{asmp}

\begin{asmp}[\iid shocks]\label{asmp:iid}
Let $G_{t+1}\coloneqq Y_{t+1}/Y_t$ be gross growth rate of income. The sequence $\set{R_{t+1},G_{t+1}}_{t=0}^\infty$ is independent and identically distributed (\iid).
\end{asmp}

These assumptions are similar to \cite{Carroll2020}, except that we allow for stochastic returns on savings. Note that the asset return $R_{t+1}$ and income growth $G_{t+1}$ are potentially mutually dependent. Due to the \iid assumption, the state variables of the income fluctuation problem \eqref{eq:IF} are financial wealth $a_t>0$ and current income $Y_t>0$. Exploiting homotheticity (Assumption \ref{asmp:CRRA}), we can reduce the number of state variables to just one, namely the wealth-income ratio (normalized wealth) $\tilde{a}_t\coloneqq a_t/Y_t$. To see this, letting $\tilde{c}_t\coloneqq c_t/Y_t$ be the consumption-income ratio (normalized consumption), dividing the borrowing constraint \eqref{eq:IF_borrow} by $Y_t$, we obtain $0\le \tilde{c}_t\le \tilde{a}_t$. Similarly, dividing the budget constraint \eqref{eq:IF_budget} by $Y_{t+1}$, we obtain
\begin{align}
\tilde{a}_{t+1}=a_{t+1}/Y_{t+1}&=(R_{t+1}Y_t/Y_{t+1})(a_t/Y_t-c_t/Y_t)+1 \notag \\
&=(R_{t+1}/G_{t+1})(\tilde{a}_t-\tilde{c}_t)+1 \notag \\
&=\tilde{R}_{t+1}(\tilde{a}_t-\tilde{c}_t)+1, \label{eq:detrend_budget}
\end{align}
where $\tilde{R}_{t+1}\coloneqq R_{t+1}/G_{t+1}$ is the asset return relative to income growth. As for the utility function, since
\begin{equation*}
c_t=Y_t\tilde{c}_t=Y_0\left(\prod_{s=1}^tG_s\right)\tilde{c}_t
\end{equation*}
(here we interpret $\prod_{s=1}^0\bullet=1$), assuming $Y_0=1$ (which is without loss of generality) and $\gamma\neq 1$, it follows from \eqref{eq:IF_utility} that
\begin{align}
\E_0\sum_{t=0}^\infty \beta^tu(c_t)&=\E_0\sum_{t=0}^\infty \left(\prod_{s=1}^t \beta G_s^{1-\gamma}\right)\frac{\tilde{c}_t^{1-\gamma}}{1-\gamma} \notag \\
&=\E_0\sum_{t=0}^\infty \left(\prod_{s=1}^t \tilde{\beta}_s\right)\frac{\tilde{c}_t^{1-\gamma}}{1-\gamma}, \label{eq:detrend_utility}
\end{align}
where $\tilde{\beta}_t\coloneqq \beta G_t^{1-\gamma}$. The discussion for $\gamma=1$ is similar. Therefore the problem reduces to an income fluctuation problem with CRRA utility, random discount factors $\set{\tilde{\beta}_t}_{t=1}^\infty$, stochastic returns $\set{\tilde{R}_t}_{t=1}^\infty$ on wealth, and constant income ($\tilde{Y}_t\equiv 1$). The general theory of income fluctuation problems with stochastic discounting, returns, and income in a Markovian setting was developed by \cite{MaStachurskiToda2020JET}. Therefore we immediately obtain the following result. In what follows, we drop the time subscript when no confusion arises.

\begin{prop}\label{prop:IF}
Suppose Assumptions \ref{asmp:CRRA}, \ref{asmp:iid} hold and
\begin{equation}
\beta\E[ G^{1-\gamma}]<1 \quad \text{and} \quad  \beta\E [RG^{-\gamma}]<1.\label{eq:spectral}
\end{equation}
Then the income fluctuation problem \eqref{eq:IF} has a unique solution. The consumption function can be expressed as $c(a,Y)=Y\tilde{c}(a/Y)$, where $\tilde{c}:(0,\infty)\to (0,\infty)$ is the consumption function of the detrended problem (maximizing \eqref{eq:detrend_utility} subject to \eqref{eq:detrend_budget}), which can be computed by policy function iteration.\footnote{See \cite{LiStachurski2014} and \cite{MaStachurskiToda2020JET} for details on policy function iteration.}
\end{prop}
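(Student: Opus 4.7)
The plan is to leverage the derivation already carried out in the statement, which shows that under CRRA utility the original problem \eqref{eq:IF} is equivalent to a detrended problem in the single state variable $\tilde{a}_t = a_t/Y_t$, and then to invoke the existence/uniqueness theorem of \cite{MaStachurskiToda2020JET} for this detrended problem.

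First, I would make the reduction precise. Under Assumption \ref{asmp:iid}, the pair $(\tilde{R}_{t+1}, \tilde{\beta}_{t+1}) = (R_{t+1}/G_{t+1}, \beta G_{t+1}^{1-\gamma})$ is a deterministic function of $(R_{t+1}, G_{t+1})$ and hence also \iid. The budget and no-borrowing constraints \eqref{eq:IF_budget}--\eqref{eq:IF_borrow}, once divided by $Y_t$ and $Y_{t+1}$, become \eqref{eq:detrend_budget} and $0 \le \tilde{c}_t \le \tilde{a}_t$ with constant income $\tilde{Y}_t \equiv 1$. By \eqref{eq:detrend_utility}, the objective becomes a CRRA problem with stochastic discount factors $\tilde{\beta}_t$. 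The case $\gamma = 1$ is analogous: $\log c_t = \log \tilde{c}_t + \sum_{s=1}^t \log G_s$, and the second term is an exogenous stochastic drift that does not affect the maximizer, so again the problem reduces to a one-dimensional one in $\tilde{a}_t$.

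Second, I would verify the two hypotheses required by \cite{MaStachurskiToda2020JET}. Their framework needs (a) a spectral-radius condition on the expected discounted marginal-return operator ensuring the Coleman--Reffett operator is a contraction on the relevant policy space, and (b) a condition ensuring the value function is finite. Under \iid shocks these spectral radii collapse to the single-period expectations $\E[\tilde{\beta}] = \beta\E[G^{1-\gamma}]$ and $\E[\tilde{\beta}\tilde{R}] = \beta\E[RG^{-\gamma}]$, which are exactly the two quantities bounded below one in \eqref{eq:spectral}. The first inequality handles finiteness of lifetime utility via summability of $\prod_{s\le t}\tilde{\beta}_s$; the second is the impatience/transversality condition that guarantees the existence of a unique fixed point of the Euler operator and convergence of policy function iteration.

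Third, once these assumptions are checked, the \cite{MaStachurskiToda2020JET} theorem delivers a unique optimal consumption function $\tilde{c}:(0,\infty)\to(0,\infty)$ for the detrended problem, computable by policy function iteration. Defining $c(a,Y) \coloneqq Y\tilde{c}(a/Y)$ and substituting back into \eqref{eq:IF_budget}--\eqref{eq:IF_borrow} and \eqref{eq:IF_utility} using the reverse of the scaling identities already displayed in the text shows that this policy is feasible and attains the value; uniqueness in the detrended problem then transfers to the original one. The main obstacle is bookkeeping rather than mathematical: matching the abstract spectral-radius conditions of \cite{MaStachurskiToda2020JET}, which are stated for general Markovian environments on bounded continuous functions, to the simple \iid expectations in \eqref{eq:spectral}, and separately handling the logarithmic case $\gamma = 1$ where the algebraic reduction \eqref{eq:detrend_utility} needs to be rewritten additively.
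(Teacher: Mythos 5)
Your proposal is correct and follows essentially the same route as the paper: detrend using the homotheticity argument already displayed in the text, then invoke Theorem 2.2 of \cite{MaStachurskiToda2020JET}, whose conditions in the \iid case reduce exactly to $\E[\tilde{\beta}]=\beta\E[G^{1-\gamma}]<1$ and $\E[\tilde{\beta}\tilde{R}]=\beta\E[RG^{-\gamma}]<1$, i.e.\ \eqref{eq:spectral}. The paper's proof is just this one-line verification; your additional bookkeeping (the $\gamma=1$ case and transferring uniqueness back to the original problem) is consistent with, and merely more explicit than, what the paper does.
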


\subsection{Tail behavior of income and wealth}\label{subsec:tailbehavior}

We now characterize the tail behavior of income and wealth in the context of the income fluctuation problem in Section \ref{subsec:IF}.

To make the model stationary, suppose that agents survive to the next period with probability $v\in (0,1)$ (perpetual youth model as in \citealp{yaari1965}). Whenever agents die, they are replaced by newborn agents. For simplicity, assume that the discount factor $\beta$ in \eqref{eq:IF_utility} already accounts for survival probability and that there is no market for life insurance (allowing for life insurance only changes $R$ to $R/v$ and is thus mathematically equivalent after reparametrization). Without loss of generality, suppose that newborn agents start with income $Y_0=1$. Then the income of a randomly selected agent is $Y_T$, where $T$ is a geometric random variable with mean $\frac{1}{1-v}$. By the assumption on income growth, the log income of a randomly selected agent
\begin{equation*}
\log Y_T=\log (Y_T/Y_0)=\sum_{t=1}^T\log G_t
\end{equation*}
is a geometric sum of \iid random variables, for which we can characterize the tail behavior as follows.

\begin{prop}[Income Pareto exponent]\label{prop:Pareto_income}
Suppose that $\Pr(G>1)>0$ and $1<v\E [G^z]<\infty$ for some $z>0$. Then the cross-sectional income distribution has a Pareto upper tail, whose exponent is the unique positive solution $z=\alpha_Y$ to
\begin{equation}
v\E [G^z]=1.\label{eq:BeareToda}
\end{equation}
\end{prop}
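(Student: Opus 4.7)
The plan is to represent the cross-sectional income $X \coloneqq Y_T$ as the solution of a Kesten-type stochastic fixed-point equation and then extract its tail exponent via Goldie's implicit renewal theorem. Conditioning on whether the sampled agent is newborn ($T=0$) or not, and exploiting the memoryless property $(T-1 \mid T \ge 1)\stackrel{d}{=} T$ of the geometric distribution, yields the distributional identity
\begin{equation*}
X \stackrel{d}{=} \mathbf{1}\set{T\ge 1}\cdot G \cdot X' + \mathbf{1}\set{T=0},
\end{equation*}
where $X'$ is an independent copy of $X$ and $G, X', T$ are mutually independent. Setting $A \coloneqq \mathbf{1}\set{T\ge 1} G$ and $B \coloneqq \mathbf{1}\set{T=0}$ places $X$ in the Kesten--Goldie form $X \stackrel{d}{=} A X' + B$, with nonnegative multiplier satisfying $\E[A^z] = v\E[G^z]$ and bounded intercept $B \in \set{0,1}$.

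Before invoking the tail theorem, I would establish existence and uniqueness of the positive root $\alpha_Y$ of $\varphi(z)\coloneqq v\E[G^z]=1$. H\"older's inequality implies $\log\varphi$ is convex, hence $\varphi$ is continuous on the interval where it is finite. We have $\varphi(0) = v < 1$, and by hypothesis there exists $z > 0$ with $1 < \varphi(z) < \infty$; the intermediate value theorem then produces a root $\alpha_Y \in (0, z)$. Since $\Pr(G > 1) > 0$ together with $v < 1$ forces $G$ to be non-degenerate, $\log\varphi$ is strictly convex and no second positive root can exist.

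To obtain the Pareto tail, I would apply Goldie's implicit renewal theorem to the stochastic fixed-point equation above. The index condition $\E[A^{\alpha_Y}] = v\E[G^{\alpha_Y}] = 1$ holds by construction; $\E[B^{\alpha_Y}] \leq 1$ is immediate; and $\E[A^{\alpha_Y}(\log A)_+] < \infty$ follows from finiteness of $\varphi$ in a neighborhood to the right of $\alpha_Y$, which is guaranteed by log-convexity of $\varphi$ together with the hypothesis $\varphi(z) < \infty$ at some $z > \alpha_Y$. Under the standard non-arithmetic assumption on $\log A$ (equivalently, on $\log G$), Goldie's theorem then yields
\begin{equation*}
\Pr(X > x) \sim C x^{-\alpha_Y}\qquad (x\to\infty)
\end{equation*}
for some constant $C > 0$, so $X$ has a Pareto upper tail with exponent $\alpha_Y$ in the sense of \eqref{eq:RV}.

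The main obstacle is verifying the non-arithmetic hypothesis on $\log G$, which is typically imposed tacitly in this literature. In the lattice case, a Grincevi\v{c}ius--Goldie-type refinement still delivers the exponent $\alpha_Y$ with a slowly varying (rather than constant) prefactor, so the Pareto tail conclusion \eqref{eq:RV} is unaffected. A secondary care point is that the moment bracketing in the hypothesis must be leveraged via log-convexity of $\varphi$ to extract finiteness on an open interval around $\alpha_Y$, as required by the $\log$-moment hypothesis of Goldie's theorem; this reduces the whole analysis to the scalar equation $v\E[G^{\alpha_Y}]=1$.
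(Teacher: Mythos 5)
Your proof is correct in its essentials but takes a genuinely different route from the paper, whose entire argument is a citation of Theorem 3.4 of Beare and Toda (2017). There the geometrically stopped random walk $\log Y_T=\sum_{t=1}^T\log G_t$ is analyzed directly by renewal-theoretic methods for stopped (Markov-modulated) random growth processes; you instead use memorylessness of the stationary age distribution to recast $X=Y_T$ as the fixed point of the random affine recursion $X\stackrel{d}{=}AX'+B$ with $A=\mathbf{1}\set{T\ge1}G$ and $B=\mathbf{1}\set{T=0}$, and then invoke Goldie's implicit renewal theorem. Your route is self-contained and has the conceptual advantage of exhibiting the income tail here and the wealth tail of Proposition \ref{prop:Pareto_capital} as two instances of the same Kesten--Goldie mechanism; the cited route is shorter and extends beyond the \iid case. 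The reduction of the exponent to the root of \eqref{eq:BeareToda} via $\E[A^z]=v\E[G^z]$ is exactly right, as is the verification of the moment condition $\E[A^{\alpha_Y}\log^+A]<\infty$ from finiteness of $\varphi$ at a point to the right of $\alpha_Y$.

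Three points need tightening. (i) $\Pr(G>1)>0$ does not rule out $G$ being degenerate at a constant exceeding one, so strict convexity of $\log\varphi$ is not available; uniqueness nevertheless follows from convexity alone, since $\log\varphi(0)=\log v<0$ forces the chord from $0$ through any zero of $\log\varphi$ to have positive slope, whence $\log\varphi>0$ strictly to the right of that zero. (ii) Goldie's theorem only gives $C\ge0$; strict positivity requires the additional (easy) observation that $Y_T$ is unbounded above, which follows from $\Pr(G>1)>0$ together with the unboundedness of $T$. (iii) Your fallback for the arithmetic case is not correct: the Grincevi\v{c}ius--Goldie lattice refinement produces a log-periodic, not slowly varying, prefactor, so \eqref{eq:RV} in the sense of Footnote \ref{fn:RV} can fail and only the weaker statement $\log\Pr(X>x)/\log x\to-\alpha_Y$ survives. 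A non-lattice condition on $\log G$ is therefore tacitly needed, exactly as it is imposed explicitly on $\log H$ in Proposition \ref{prop:Pareto_capital}; since this caveat is equally present in the proposition as stated, it does not undermine your approach relative to the paper's.
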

\begin{proof}
See \citet[Theorem 3.4]{BeareToda-dPL}.
\end{proof}

To characterize the tail behavior of wealth, we first note that the normalized consumption function $\tilde{c}$ in Proposition \ref{prop:IF} is concave and asymptotically linear with a specific slope.\footnote{\cite{CarrollKimball1996} showed the sufficiency of hyperbolic absolute risk aversion (HARA, which includes CRRA) for the concavity of the consumption function. \cite{Toda2021JME} proved the necessity.}

\begin{prop}[Concavity and asymptotic linearity]\label{prop:linear}
Let everything be as in Proposition \ref{prop:IF}. Then $\tilde{c}$ is concave and
\begin{equation}
\lim_{a\to\infty}\frac{\tilde{c}(a)}{a}=\begin{cases*}
1-(\E[\beta R^{1-\gamma}])^{1/\gamma} & if $\E [\beta R^{1-\gamma}]<1$,\\
0 & otherwise.
\end{cases*}
\label{eq:linear}
\end{equation}
\end{prop}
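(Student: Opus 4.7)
The plan is to work throughout in the detrended problem of Proposition \ref{prop:IF}, with random discount factor $\tilde\beta = \beta G^{1-\gamma}$, relative return $\tilde R = R/G$, and constant unit income, and to exploit the algebraic identity $\E[\tilde\beta\tilde R^{1-\gamma}] = \E[\beta R^{1-\gamma}]$, so that the threshold appearing in the statement is precisely the Merton-type condition expressed in the detrended primitives.

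For concavity of $\tilde c$, I would invoke the classical result of \cite{CarrollKimball1996} that HARA (in particular CRRA) utility produces a concave optimal consumption function in income-fluctuation problems, which the paper has already cited. As a self-contained alternative, one checks by induction that the Coleman--Reffett policy operator underlying the policy function iteration of Proposition \ref{prop:IF}, defined through the implicit Euler equation $c_{n+1}(\tilde a)^{-\gamma} = \E[\tilde\beta\tilde R\,c_n(\tilde R(\tilde a-c_{n+1}(\tilde a))+1)^{-\gamma}]$, preserves concavity, and then one passes to the pointwise limit.

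Given concavity and $\tilde c(0)=0$ (forced by the borrowing constraint \eqref{eq:IF_borrow}), the chord inequality applied at $0$ shows that $\tilde c(a)/a$ is nonincreasing in $a$ and valued in $[0,1]$, so it converges to some $m\in[0,1]$. To identify $m$, I would use the interior Euler equation
\begin{equation*}
\tilde c(a)^{-\gamma} = \E\bigl[\tilde\beta\tilde R\,\tilde c\bigl(\tilde R(a-\tilde c(a))+1\bigr)^{-\gamma}\bigr],
\end{equation*}
valid for $a$ large enough that the borrowing constraint is slack, divide by $a^{-\gamma}$, and write the inner argument as $a\cdot[\tilde R(1-\tilde c(a)/a)+1/a]$. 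Since $\tilde c(a)/a\to m$ and since $\tilde c$ itself is asymptotically linear with slope $m$ along each sample path, the ratio $\tilde c(\tilde R(a-\tilde c(a))+1)/a$ converges almost surely to $m\tilde R(1-m)$. Taking the limit and cancelling $m^{-\gamma}$ yields $(1-m)^\gamma = \E[\beta R^{1-\gamma}]$, which gives $m = 1-(\E[\beta R^{1-\gamma}])^{1/\gamma}$ when $\E[\beta R^{1-\gamma}]<1$; when $\E[\beta R^{1-\gamma}]\ge 1$ no positive solution exists and $m\in[0,1]$ forces $m=0$.

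The main obstacle is justifying the exchange of limit and expectation in the last step. Concavity of $\tilde c$ with $\tilde c(0)=0$ and the monotone convergence $\tilde c(a)/a\downarrow m$ deliver the two-sided bound $m\,x\le \tilde c(x)\le x$ for all $x$ in the regime $m>0$; selecting $a$ so large that $1-\tilde c(a)/a\ge (1-m)/2$, the integrand is dominated by a constant multiple of $\tilde\beta\tilde R^{1-\gamma}$, whose integrability follows from the spectral condition $\beta\E[RG^{-\gamma}]<1$ in \eqref{eq:spectral} combined with the identity $\E[\tilde\beta\tilde R^{1-\gamma}]=\E[\beta R^{1-\gamma}]<\infty$ via Hölder if needed. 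Dominated convergence then closes the argument. The degenerate case $\E[\beta R^{1-\gamma}]\ge 1$ is handled by contradiction: any $m>0$ would force $(1-m)^\gamma = \E[\beta R^{1-\gamma}]\ge 1$, hence $m\le 0$, so $m=0$ is the only admissible limit.
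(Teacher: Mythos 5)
Your route is genuinely different from the paper's: the paper proves this proposition purely by citation, taking concavity from Proposition 2.5 and Remark 2.1 of Ma, Stachurski and Toda (2020) and asymptotic linearity from Theorem 3 and Example 2 of Ma and Toda (2021), after observing the same algebraic identity $\E[\tilde{\beta}\tilde{R}^{1-\gamma}]=\E[\beta R^{1-\gamma}]$ that you exploit. Attempting a self-contained reconstruction is legitimate, and your setup is sound: concavity via preservation under the policy operator, the chord inequality giving that $\tilde{c}(a)/a$ decreases to some $m\in[0,1]$, and the normalized Euler equation to identify $m$. But the reconstruction has a genuine gap at precisely the step that is the substantive content of the cited theorem: in the case $\E[\beta R^{1-\gamma}]<1$ you never show that $m>0$. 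Your identification divides the Euler equation by $a^{-\gamma}$ and cancels $m^{-\gamma}$, and your dominated-convergence bound rests on the lower envelope $\tilde{c}(x)\ge mx$; both are vacuous when $m=0$. What you have actually established is the conditional claim ``if $m>0$ then $(1-m)^{\gamma}=\E[\beta R^{1-\gamma}]$,'' plus the correct contradiction argument forcing $m=0$ when $\E[\beta R^{1-\gamma}]\ge 1$. The scenario $m=0$ with $\E[\beta R^{1-\gamma}]<1$ is not excluded: in that case both sides of the normalized Euler equation diverge, so no contradiction appears without a rate comparison.

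Closing this requires a separate lower-bound argument, which is where Ma and Toda (2021, Theorem 3) do their work: one shows by induction that the policy-iteration sequence started from the upper solution $c_0(a)=a$ dominates a linear function $\xi_n a$, where $\xi_n$ obeys a scalar recursion whose fixed point is $1-(\E[\beta R^{1-\gamma}])^{1/\gamma}$, and then passes the bound to the limit policy; this yields $\liminf_{a\to\infty}\tilde{c}(a)/a\ge 1-(\E[\beta R^{1-\gamma}])^{1/\gamma}>0$, after which your identification argument goes through. Two smaller loose ends: you should verify that the borrowing constraint is slack for large $a$ (easy, since $\tilde{c}(a)=a$ for all large $a$ would force the unconstrained Euler branch to bind and yield a contradiction as $a^{-\gamma}\to 0$), and the finiteness of $\E[\beta R^{1-\gamma}]$ used in your domination step does not follow from \eqref{eq:spectral} alone and is really an implicit hypothesis of the statement.
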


Using Proposition \ref{prop:linear} and setting $\rho=\min\set{(\E [\beta R^{1-\gamma}])^{1/\gamma},1}$, for high enough asset level, the detrended budget constraint \eqref{eq:detrend_budget} becomes approximately
\begin{equation*}
\tilde{a}_{t+1}\approx \rho\tilde{R}_{t+1}\tilde{a}_t+1,
\end{equation*}
which is a random multiplicative process (\cite{kesten1973} process). Under specific assumptions, \citet[Theorem 3.3]{MaStachurskiToda2020JET} prove that the upper tail of the stationary distribution of normalized wealth $\tilde{a}_t$ has a Pareto lower bound. Although a sharp characterization of the tail behavior is generally difficult, in our setting it is possible to obtain an exact characterization due to concavity and the \iid assumption.

\begin{prop}\label{prop:Pareto_capital}
Let $\rho=\min\set{(\E [\beta R^{1-\gamma}])^{1/\gamma},1}$ and $H=\rho\tilde{R}$. Suppose that
\begin{enumerate*}
\item $R$ is thin-tailed (meaning $\E[R^z]<\infty$ for all $z>0$),
\item $\log H$ is non-lattice (not supported on an evenly spaced grid), and
\item $\Pr(H>1)>0$, and $1<v\E [H^z]<\infty$ for some $z>0$.
\end{enumerate*}
Then the cross-sectional normalized wealth distribution is either bounded or has a Pareto upper tail, in which case the exponent is the unique positive solution $z=\tilde{\alpha}$ of
\begin{equation}
v\E [H^z]=1.\label{eq:BeareToda2}
\end{equation}
The Pareto exponent for wealth and capital income is then $\alpha=\min\set{\tilde{\alpha},\alpha_Y}$.
\end{prop}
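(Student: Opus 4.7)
My plan is to characterize the tail of normalized wealth $\tilde{a}$ first and then lift the result to unnormalized wealth $a=Y\tilde{a}$ and to capital income.

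\textbf{Step 1: Sandwich the law of motion.} By Proposition \ref{prop:linear}, $\tilde{c}$ is concave with $\tilde{c}(a)/a\to 1-\rho$. Since $\tilde{c}(0)\ge 0$, concavity forces $a\mapsto\tilde{c}(a)/a$ to be non-increasing, so the savings policy $\phi(a)\coloneqq a-\tilde{c}(a)$ satisfies $\phi(a)\le\rho a$ for every $a\ge 0$. Asymptotic linearity then gives, for each $\epsilon>0$, constants $A_\epsilon,D_\epsilon$ with $\phi(a)\ge(\rho-\epsilon)a-D_\epsilon$ on $[0,\infty)$. Plugging into \eqref{eq:detrend_budget} sandwiches $\tilde{a}_{t+1}$ between two Kesten-type recursions whose multiplicative coefficients are $(\rho-\epsilon)\tilde{R}_{t+1}$ and $H_{t+1}=\rho\tilde{R}_{t+1}$, plus bounded additive perturbations.

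\textbf{Step 2: Apply the perpetual-youth/Beare--Toda theorem.} A randomly sampled living agent has geometric age $T$, so the cross-sectional distribution of normalized wealth is the law of $\tilde{a}_T$. Iterating the upper envelope yields a geometrically killed perpetuity in the multiplier $H$, which is exactly the object analyzed in \citet[Theorem 3.4]{BeareToda-dPL} and already invoked for Proposition \ref{prop:Pareto_income}. Combined with the stated hypotheses $\Pr(H>1)>0$, $\log H$ non-lattice, and $1<v\E[H^z]<\infty$ for some $z>0$, that theorem delivers an exact Pareto upper tail with exponent $\tilde{\alpha}$ solving $v\E[H^{\tilde{\alpha}}]=1$. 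The lower envelope gives a matching Pareto lower bound with exponent $\tilde{\alpha}_\epsilon\to\tilde{\alpha}$ as $\epsilon\downarrow 0$; the thin-tail assumption on $R$ keeps $z\mapsto v\E[H^z]$ smooth in a full neighborhood of $\tilde{\alpha}$, so the Cram\'er roots vary continuously with the multiplier. In corner configurations where the envelope never exceeds unity, one falls into the bounded alternative of the statement.

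\textbf{Step 3: Unnormalized wealth and capital income.} Writing $\log a=\log Y+\log\tilde{a}$ shows that the tail of $a$ is dominated by the heavier of the two exponentials. If $\tilde{a}$ and $Y$ were independent, Breiman's lemma would give $\Pr(a>x)\sim\E[Y^{\tilde{\alpha}}]\Pr(\tilde{a}>x)$ when $\tilde{\alpha}<\alpha_Y$ (and symmetrically otherwise), yielding exponent $\min\{\tilde{\alpha},\alpha_Y\}$. Dependence through shared $G$-shocks is handled by the same regenerative argument behind Proposition \ref{prop:Pareto_income}: conditional on a cohort, both $Y_T$ and $\tilde{a}_T$ are geometrically killed perpetuities driven by i.i.d.\ increments, so the joint tail still decays at rate $\min\{\tilde{\alpha},\alpha_Y\}$. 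Capital income equals the thin-tailed return $R$ times wealth, so Breiman's lemma preserves the exponent, giving $\alpha=\min\{\tilde{\alpha},\alpha_Y\}$. \textbf{The main obstacle} is the lower Pareto bound in Step 2: the upper bound $\phi(a)\le\rho a$ is immediate, but matching it from below requires absorbing the boundary region $\{a\le A_\epsilon\}$ and sending $\epsilon\downarrow 0$ inside the moment domain of $H$, which is exactly where the thin-tail hypothesis on $R$ earns its keep.
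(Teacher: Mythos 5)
Your Steps 1 and 3 are in the spirit of the paper (the paper also uses concavity/asymptotic linearity from Proposition \ref{prop:linear} and then an accounting-plus-product-tail argument for capital income, citing \cite{JessenMikosch2006}), but your core Step 2 has a genuine gap. The upper envelope of \eqref{eq:detrend_budget} is the Kesten-type recursion $\tilde a_{t+1}\le H_{t+1}\tilde a_t+1$, i.e.\ a perpetuity with an additive term, and this is \emph{not} the object analyzed in \citet[Theorem 3.4]{BeareToda-dPL}: that theorem treats a geometrically stopped purely multiplicative (Markov-additive) process such as $Y_T=\prod_{t\le T}G_t$, which is why it could be invoked verbatim for Proposition \ref{prop:Pareto_income} but cannot be invoked verbatim here. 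Handling the additive term and the nonlinearity of the true policy is exactly the issue, and it is why the paper instead applies \citet[Theorem 1.8]{Mirek2011} on iterated Lipschitz maps (concavity of $\tilde c$ gives the Lipschitz property, asymptotic linearity gives the limiting multiplier $H$, and death/rebirth enters as a resetting map, producing \eqref{eq:BeareToda2}). Your route could be repaired with an implicit-renewal/Goldie-type argument or a bound showing the additive term does not alter the exponent, but as written the key citation does not do the work you assign to it. Relatedly, an upper envelope can only deliver an upper bound on tail probabilities, so the claim that it yields ``an exact Pareto upper tail'' is backwards; combining it with the $\epsilon$-lower envelope gives matching polynomial bounds (a tail index), which is weaker than the regular-variation/Kesten--Goldie asymptotics that Mirek's theorem provides and that the paper's definition of a Pareto tail (Footnote \ref{fn:RV}) presupposes. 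Finally, your description of the bounded alternative is off: under the maintained hypothesis $\Pr(H>1)>0$ the envelope always exceeds unity with positive probability, yet the true process can still be bounded (e.g.\ if the agent saves nothing in a low-wealth region and never escapes it), so the dichotomy is not about ``the envelope never exceeding unity'' but about whether the actual process reaches the region where the asymptotic multiplier is operative.

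Your Step 3 is essentially the paper's argument (thin-tailed $R$ plus the product $a=Y\tilde a$ of two Pareto-tailed, possibly dependent variables inheriting the smaller exponent), and it shares the paper's looseness: for genuinely dependent factors the minimum-exponent rule is not automatic (perfect dependence halves the exponent), so the appeal to Breiman plus a ``regenerative argument'' would need to be fleshed out just as the paper's citation of \cite{JessenMikosch2006} would; this is not a point on which your proposal is worse than the published proof.
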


Proposition \ref{prop:Pareto_capital} is significant despite its simplicity. According to the model, we always have $\alpha\le \alpha_Y$, typically with a strict inequality as we see in the numerical example below. This is in sharp contrast to canonical incomplete market general equilibrium models such as \cite{aiyagari1994}, where agents can save using only a risk-free asset. In such models, the impossibility theorem of \cite{StachurskiToda2019JET} implies that the tail behavior of income and wealth is the same, implying $\alpha=\alpha_Y$ in our setting.\footnote{The original proof in \cite{StachurskiToda2019JET} contained an error; it has been corrected in \cite{StachurskiToda2020Corrigendum}.} Therefore, unlike canonical incomplete market models, our model can explain the empirical fact that capital income is more unequal than labor income. The key assumption leading to this conclusion is the presence of stochastic returns.

We discuss an analytically solvable example to build intuition.

\begin{exmp}\label{exmp:IFP}
Let $\Delta>0$ be the length of time of one period and the discount factor be $\beta=\e^{-\delta\Delta}$, where $\delta>0$ is the discount rate. Suppose income grows at a constant rate $g>0$, so $G=\e^{g\Delta}$. Suppose asset return is risk-free, so $R=\e^{r \Delta}$ with $r>0$. Finally, let the survival probability be $v=\e^{-\eta\Delta}$, where $\eta$ is the death rate. Then \eqref{eq:BeareToda} becomes
\begin{equation*}
1=\e^{-\eta\Delta}\e^{zg\Delta}\iff z=\eta/g,
\end{equation*}
so the income Pareto exponent is $\alpha_Y=\eta/g$. (This is the classical result of \cite{WoldWhittle1957} in discrete-time.) Suppose in addition that $-\eta+r(1-\gamma)<0$ so that $\beta R^{1-\gamma}<1$. Since
\begin{equation*}
H=(\E [\beta R^{1-\gamma}])^{1/\gamma}\tilde{R}=(\beta R)^{1/\gamma}/G=\e^{(\frac{r-\eta}{\gamma}-g)\Delta},
\end{equation*}
solving \eqref{eq:BeareToda2} the normalized wealth Pareto exponent is
\begin{equation*}
\tilde{\alpha}=\frac{\eta\gamma}{r-\eta-g\gamma}
\end{equation*}
assuming $r-\eta-g\gamma>0$. Therefore
\begin{equation*}
\tilde{\alpha}<\alpha_Y\iff \frac{\eta\gamma}{r-\eta-g\gamma}<\frac{\eta}{g}\iff r>\eta+2g\gamma,
\end{equation*}
so the wealth (hence capital income) Pareto exponent is smaller than the labor income Pareto exponent if the return on wealth $r$ is sufficiently large. In summary, we obtain the following result: suppose $-\eta+r(1-\gamma)<0$ and let $\alpha_\capital,\alpha_\labor$ be the capital and labor income Pareto exponents. Then
\begin{equation}
\begin{cases*}
\alpha_\capital=\alpha_\labor=\frac{\eta}{g} & if $r\le \eta+2g\gamma$,\\
\alpha_\capital=\frac{\eta\gamma}{r-\eta-g\gamma}<\frac{\eta}{g}=\alpha_\labor& if $r>\eta+2g\gamma$.
\end{cases*}\label{eq:alphaexmp}
\end{equation}
Note that the labor income Pareto exponent $\alpha_\labor=\eta/g$ is highly sensitive to the income growth rate $g$. However, provided that $r>\eta+2g\gamma$, the capital income Pareto exponent $\alpha_\capital$ is not very sensitive to the value of $g$ because the denominator is $r-\eta-g\gamma$. This example is consistent with our result in Section \ref{subsec:results} that the capital Pareto exponent is smaller than the labor Pareto exponent but the two values are only weakly related.
\end{exmp}

\subsection{Numerical example}

We further examine the tail behavior of income and wealth using a numerical example of the income fluctuation problem \eqref{eq:IF}. Suppose that asset return is \iid lognormal, so $\log R\sim N((\mu-\sigma^2/2)\Delta,\sigma^2\Delta)$, where $\Delta>0$ is the length of one period, $\mu$ is the expected return, and $\sigma$ is volatility. Suppose every period the agent is ``promoted'' with some probability, so the income growth rate is
\begin{equation*}
G_{t+1}=Y_{t+1}/Y_t=\begin{cases*}
1 & with probability $1-p$,\\
\e^g & with probability $p$,
\end{cases*}
\end{equation*}
where $p\in (0,1)$ is the promotion probability and $g$ is the log income growth rate conditional on promotion. We parametrize the promotion probability as $p=1-\e^{-\Delta/L}$, where $L$ is the expected length of time until a promotion. Using \eqref{eq:BeareToda}, the labor income Pareto exponent is determined such that
\begin{equation}
1=v\E [G^{\alpha_Y}]=v(1-p+p\e^{g\alpha_Y})\iff \alpha_Y=\frac{1}{g}\log\frac{1-v+vp}{vp}.\label{eq:alphaY}
\end{equation}

We set the parameter values as in Table \ref{t:param}. One unit of time corresponds to a year and one period is a quarter, so $\Delta=1/4$. The preference parameters (discount rate and risk aversion) are standard. The death rate of $\eta=0.025$ implies an average (economically active) age of $1/\eta=40$ years. The expected return and volatility roughly correspond to the stock market. We set the labor income Pareto exponent to $\alpha_Y=3$, which is roughly the median value in Figure \ref{fig:alpha}. Using the survival probability $v=\e^{-\eta\Delta}$ and \eqref{eq:alphaY}, the implied value of income growth upon promotion is $g=0.0403$. The wealth Pareto exponent determined by \eqref{eq:BeareToda2} is then $\alpha=1.201$.

\begin{table}[!htb]
\centering
\caption{Parameter values}\label{t:param}
\begin{tabular}{lcc}
\toprule
Parameter & Symbol & Value \\
\midrule
Length of one period & $\Delta$ & 1/4\\
Discount rate & $\delta$ & 0.04\\
Relative risk aversion & $\gamma$ & 2\\
Death rate & $\eta$ & 0.025 \\
Expected return & $\mu$ & 0.07\\
Volatility & $\sigma$ & 0.15\\
Expected time to promotion & $L$ & 5\\
Labor income Pareto exponent & $\alpha_Y$ & 3\\
\bottomrule
\end{tabular}
\end{table}

To numerically solve the income fluctuation problem \eqref{eq:IF}, we discretize the log asset return $\log R$ using a 7-point Gauss-Hermite quadrature and apply policy function iteration (see Appendix \ref{sec:solution}). After solving the individual problem, we apply the Pareto extrapolation algorithm developed in \cite{Gouin-BonenfantTodaParetoExtrapolation} to accurately compute the stationary (normalized) wealth distribution. Finally, we also simulate an economy with $10^5$ agents. Figure \ref{fig:IFP} shows the results.

\begin{figure}[!htb]
\centering
\begin{subfigure}{0.48\linewidth}
\includegraphics[width=\linewidth]{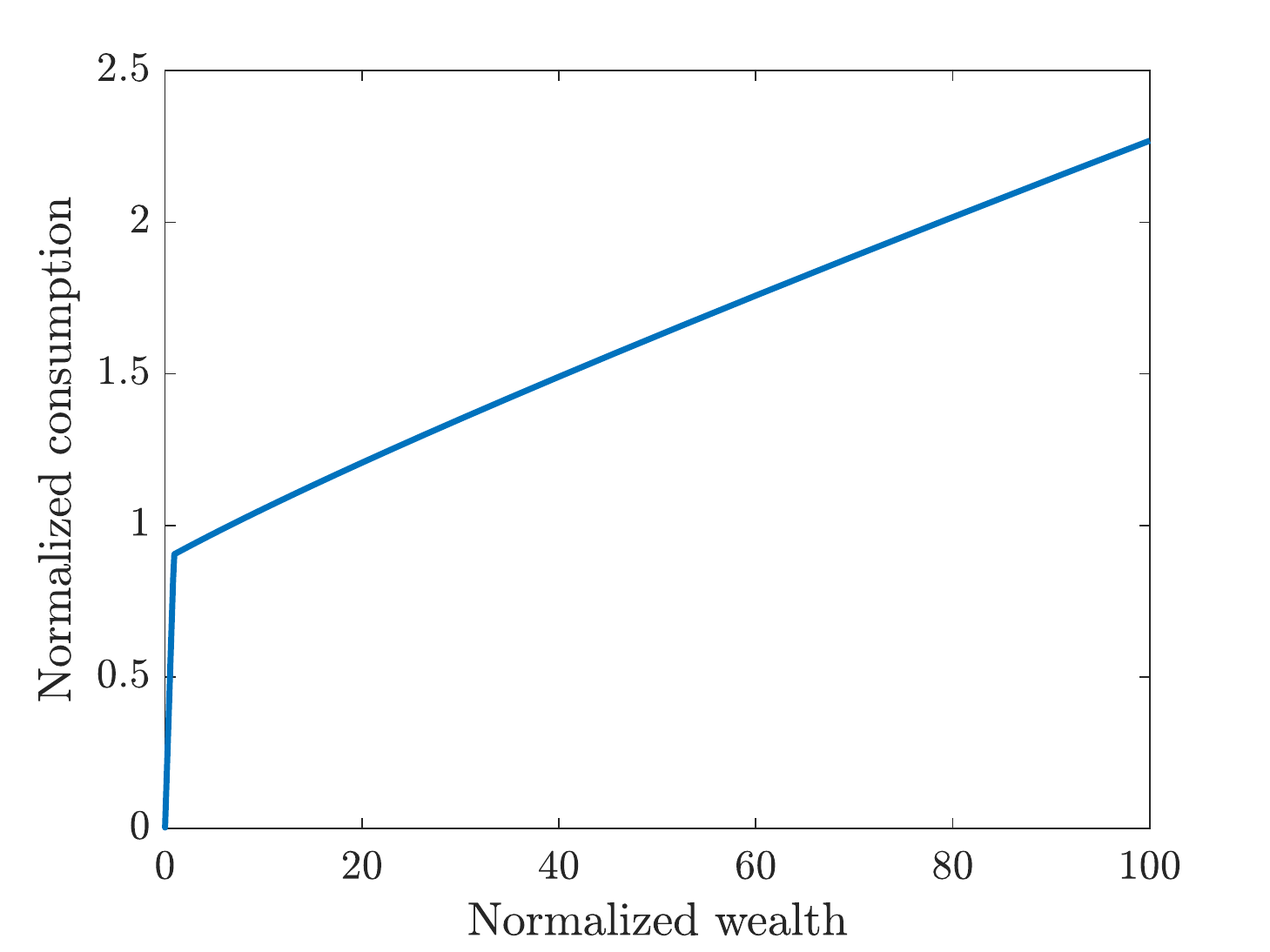}
\caption{Normalized consumption function.}\label{fig:IFP_ca}
\end{subfigure}
\begin{subfigure}{0.48\linewidth}
\includegraphics[width=\linewidth]{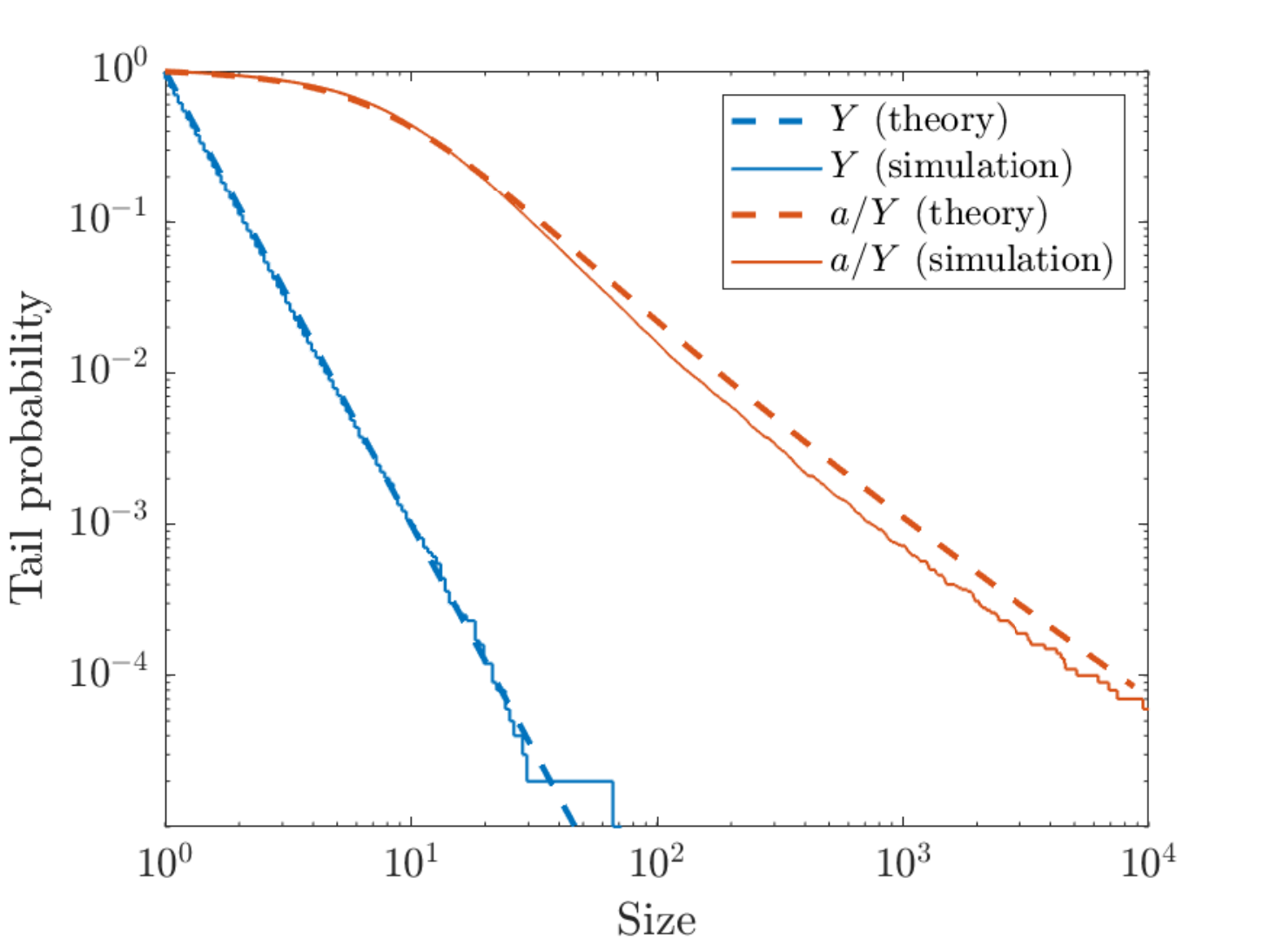}
\caption{Income and wealth distributions.}\label{fig:IFP_sim}
\end{subfigure}
\caption{Solution to income fluctuation problem.}\label{fig:IFP}
\end{figure}

Figure \ref{fig:IFP_ca} shows the normalized consumption function $\tilde{c}(\tilde{a})$ in the range $\tilde{a}\in [0,100]$. Consistent with Proposition \ref{prop:linear}, the consumption function is roughly linear for high asset level. Figure \ref{fig:IFP_sim} shows the size distributions of income $Y$ normalized wealth $\tilde{a}=a/Y$ in a log-log plot, both from the theoretical model and the simulation. The fact that the tail probability $\Pr(X>x)$ exhibits a straight line pattern in a log-log plot suggests that the size distributions have Pareto upper tails, consistent with theory. Furthermore, the slope for income is steeper than that of normalized wealth, so wealth (hence capital income) is more unequally distributed than labor income.

Finally, Figure \ref{fig:IFP_alpha} shows the income and wealth Pareto exponents when we change the income growth rate $g$ in the range $g\in [0.02,0.1]$, fixing other parameters. Because the income Pareto exponent is inversely proportional to income growth by \eqref{eq:alphaY}, the labor income Pareto exponent is highly sensitive to income growth. On the other hand, the wealth (capital income) Pareto exponent does not depend much on income growth by the same intuition as in Example \ref{exmp:IFP}. Thus our model is consistent with our empirical findings in Section \ref{subsec:results} that capital and labor income Pareto exponents are only weakly related.

\begin{figure}[!htb]
\centering
\includegraphics[width=0.7\linewidth]{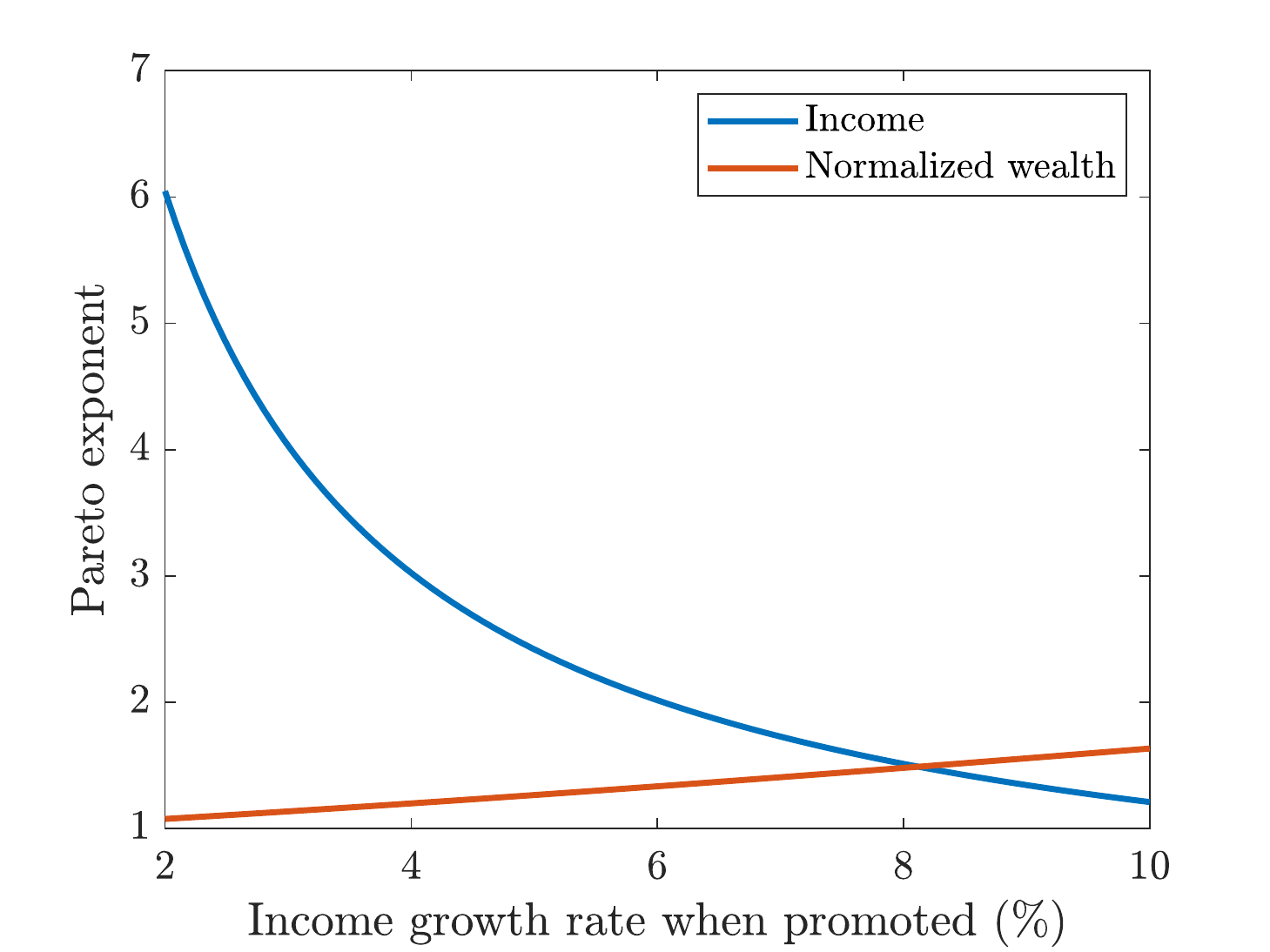}
\caption{Dependence of income and wealth Pareto exponents on $g$.}\label{fig:IFP_alpha}
\end{figure}



\appendix

\section{Proofs}\label{sec:proof}

\begin{proof}[Proof of Proposition \ref{prop:IF}]
Applying Theorem 2.2 of \cite{MaStachurskiToda2020JET} to the \iid case, a sufficient condition for the existence of a solution to the detrended problem is $\E [\tilde{\beta}]<1$ and $\E [\tilde{\beta}\tilde{R}]<1$, which is equivalent to \eqref{eq:spectral}.
\end{proof}

\begin{proof}[Proof of Proposition \ref{prop:linear}]
The concavity of $\tilde{c}$ follows from Proposition 2.5 and Remark 2.1 of \cite{MaStachurskiToda2020JET}. The asymptotic linearity of $\tilde{c}$ follows from \citet[Theorem 3]{MaToda2021JET}. Noting that
\begin{equation*}
\E [\tilde{\beta}\tilde{R}^{1-\gamma}]=\E [\beta G^{1-\gamma} (R/G)^{1-\gamma}]=\E [\beta R^{1-\gamma}],
\end{equation*}
the limit \eqref{eq:linear} follows from their Example 2.
\end{proof}

\begin{proof}[Proof of Proposition \ref{prop:Pareto_capital}]
Since by Proposition \ref{prop:linear} $\tilde{c}$ is concave, it is in particular Lipschitz continuous. Under the maintained assumptions, we can apply Theorem 1.8 of \cite{Mirek2011} to deduce that the normalized wealth $\tilde{a}$ is either bounded or has a Pareto upper tail with exponent characterized as the solution to \eqref{eq:BeareToda2}, where we have used the asymptotic linearity of $\tilde{c}$ established in Proposition \ref{prop:linear}.

By accounting, capital income (excluding capital loss) is
\begin{equation*}
Y_\capital\coloneqq \max\set{R-1,0}(a-c(a))=\max\set{R-1,0}Y(\tilde{a}-\tilde{c}(\tilde{a})).
\end{equation*}
Using Proposition \ref{prop:linear}, this quantity is approximately equal to $\rho \max\set{R-1,0}Y\tilde{a}$. The claim $\alpha=\min\set{\tilde{\alpha},\alpha_Y}$ then follows because asset return $R$ is thin-tailed and $a_t=Y_t\tilde{a}_t$ is the product of two (potentially dependent) random variables with Pareto upper tails, which inherits the smallest Pareto exponent by the result in \cite{JessenMikosch2006}.
\end{proof}

\section{Solving the income fluctuation problem}\label{sec:solution}
In this appendix we discuss how to solve the detrended income fluctuation problem. After detrending, the problem becomes
\begin{align*}
&\maximize && \E_0\sum_{t=0}^\infty \left(\prod_{s=1}^t \beta_t\right)\frac{c_t^{1-\gamma}}{1-\gamma}\\
&\st && a_{t+1}=R_{t+1}(a_t-c_t)+1,\\
&&& 0\le c_t\le a_t,
\end{align*}
where $\set{R_t,\beta_t}_{t=1}^\infty$ is \iid (though $R_t$ and $\beta_t$ are generally correlated.) According to \cite{MaStachurskiToda2020JET}, the Euler equation is
\begin{equation}
c_t^{-\gamma}=\max\set{\E_t [\beta_{t+1}R_{t+1}c_{t+1}^{-\gamma}],a_t^{-\gamma}}.\label{eq:Euler}
\end{equation}
Let $c(a)$ be the consumption function. Taking the $-1/\gamma$-th power of \eqref{eq:Euler}, we obtain
\begin{equation}
c(a)=\min\set{\left(\E [\beta R c(a')^{-\gamma}]\right)^{-1/\gamma},a},\label{eq:c_new}
\end{equation}
where $a'=R(a-c(a))+1$. Therefore we can compute the consumption function using the following variant of the policy function iteration algorithm:
\begin{enumerate}
\item Initialize the consumption function $c(a)$. For example, we can set $c(a)=\min\set{a,1+ma}$, where $m=\max\set{1-(\E [\beta R^{1-\gamma}])^{1/\gamma},0}$ is the theoretical asymptotic marginal propensity to consume according to \eqref{eq:linear}.
\item Update $c(a)$ by the right-hand side of \eqref{eq:c_new}, where $a'=R(a-c(a))+1$.
\item Iterate the above step until $c(a)$ converges.
\end{enumerate}
While the above algorithm has no guarantee to converge unlike the ``true'' policy function iteration algorithm discussed in \cite{MaStachurskiToda2020JET}, it has the advantage of avoiding root-finding and hence it is fast.

In Section \ref{sec:IFP}, we use this algorithm on a 100-point exponential grid for normalized wealth $\tilde{a}$ that spans $[0,10^4]$, with a median grid point of 10. The details on the exponential grid are discussed in \cite{Gouin-BonenfantTodaParetoExtrapolation}.

\section{Tables (not for publication)}\label{sec:tables}
\singlespacing



\end{document}